\documentclass[11pt,a4paper]{article}
\usepackage[english]{babel}
\usepackage{amsmath,amssymb,amsthm,epsfig,color,graphicx}
\usepackage[latin1]{inputenc}
%----------------

\usepackage{tikz, tkz-euclide}
\usetikzlibrary{calc}
\usetikzlibrary{decorations.markings}

%----------------
\usepackage{bbm}
\addtolength{\hoffset}{-1.5cm} \addtolength{\textwidth}{2cm}
\addtolength{\voffset}{-1.5cm} \addtolength{\textheight}{2cm}
\pagestyle{myheadings} \markboth{\textsc{\footnotesize{R.
Bissacot, E. O. Endo, A.C.D. van Enter, A. Le Ny}}} {\textsc{\footnotesize{{Entropic Repulsion and lack of the $g$-measure property for Dyson models.}}}}
  \newtheorem*{theorem*}        {Theorem}
	\newtheorem*{conjecture*}   {Conjecture}
  \newtheorem{theorem}           {Theorem}
  \newtheorem{lemma}              {Lemma}
  \newtheorem*{lemma*}          {Lemma}
  \newtheorem{definition}         {Definition}
  \newtheorem{corollary}          {Corollary}
  \newtheorem{proposition}      {Proposition}

%% Comman to edit collors
%
%
\usepackage{xcolor}

\definecolor{Red}{cmyk}{0,1,1,0}

\definecolor{Blue}{cmyk}{1,1,0,0}

%
%
% USUALLY USED:
%
%

\newcommand{\ba}{\begin{array}}
\newcommand{\ea}{\end{array}}
\newcommand{\be}{\begin{equation}}
\newcommand{\ee}{\end{equation}}
\newcommand{\ben}{\begin{enumerate}}
\newcommand{\een}{\end{enumerate}}

%
%
%%%%%%%%%%%%%%%% Greek letters
%
%
  \def\E{\mathop{\textrm{\rm E}}\nolimits}                  %E
                    %V
                %Adj
                    %P
  \def\d{\mathop{\textrm{\rm d}}\nolimits}                  %d
              %conv
                    %H
                %exp
	            %arctanh
		            %supp
		            %Int
		            %Ext
				            %secante hiperbolica

\let\m=\mu
\let\n=\nu

\let\L=\Lambda

%
%
% Math symbols:
%
%

%
%
% Math sets:
%
%
\newcommand{\La}{\Lambda}
\newcommand{\R}{\mathbb{R}}

\newcommand{\Z}{\mathbb{Z}}
\newcommand{\N}{\mathbb{N}}

\begin{document}

\title{ Entropic repulsion and lack of the $g$-measure property for Dyson models}
\author{
Rodrigo Bissacot\\
\footnotesize{\texttt{rodrigo.bissacot@gmail.com}}\\
\footnotesize{Institute of Mathematics and Statistics - IME USP - University of S\~ao Paulo, Brazil}
\\[0.3cm]
Eric Ossami Endo\\\
\footnotesize{\texttt{eric@ime.usp.br}}\\
\footnotesize{Institute of Mathematics and Statistics - IME USP - University of S\~ao Paulo, Brazil}\\
\footnotesize{Johann Bernoulli Institute for Mathematics and Computer Science - University of Groningen, Netherlands}\\
\\
Aernout C. D. van Enter\\
\footnotesize{\texttt{a.c.d.van.enter@rug.nl}}\\
\footnotesize{Department of Mathematics}\\
\footnotesize{Johann Bernoulli Institute for Mathematics and Computer Science - University of Groningen, Netherlands}\\
\\
Arnaud Le Ny\\
\footnotesize{\texttt{arnaud.le-ny@u-pec.fr}}\\
\footnotesize{Laboratoire de Math\'ematiques et d'Analyse Appliqu\'ees}\\
\footnotesize{LAMA UMR CNRS 8050 -- Universit\'e Paris-Est (UPEC), Cr\'eteil, France}\\
}
\maketitle

\begin{abstract}
 We consider Dyson models, Ising models with  slow polynomial decay, at low temperature and show that its Gibbs measures deep in the phase transition region are not $g$-measures. The main ingredient in the proof is the occurrence of an entropic repulsion effect,  which follows from the  mesoscopic stability of a (single-point) interface for these long-range models in the phase transition region.
\end{abstract}

{\footnotesize{\bf Keywords:}   Dyson model, entropic repulsion, $g$-measure,  interfaces.}

{\footnotesize {\bf Mathematics Subject Classification (2000):} 82B20, 05C05, 82B26}

\section{Introduction}
 
Dyson models, long-range Ising models with  ferromagnetic, polynomially decaying, pair interactions, have been studied for a considerable time. After Dyson \cite{Dys1, Dys2} proved the existence of a phase transition, confirming a conjecture due to Kac and Thompson \cite{Kac}, various alternative proofs and further properties have been derived. One recent low-temperature result which we will find particularly useful is the existence  of phase separation, properly defined, with an ``interface point", which is  to some extent stable  under infinite-volume limits with appropriate mixed boundary conditions similar to Dobrushin boundary conditions introduced in higher dimensions. Indeed, in \cite{CMPR} it was shown that a Dyson model in a finite interval of length L, with $-$-boundary conditions on the left and $+$-boundary conditions on the right, has an interface of ``mesoscopic size" for decay parameter values
\footnote{Our results will be valid only  for $\alpha$ satisfying the lower bound $\alpha > \alpha_+$ -- already present in \cite{CFMP, CMPR, COP}. In contrast  to the upper bound $ \alpha < \alpha=2$, we believe this lower bound is technical only, as we shall see.} 
$\alpha_+ < \alpha < 2$, once the temperature is low enough (but non-zero). This means that with overwhelming probability its location is in the middle of the interval, up to a Gaussian correction which grows sublinearly with $L$.

In this paper we notice that this  interface result implies in a fairly straightforward manner  that a form of entropic repulsion occurs, in the sense that a large interval of minuses inserted in the $+$-phase has two moderately large intervals around it\footnote{ They are the "wet" regions,  while frozen interval is  a hard wall in  a "complete wetting" situation.} in which the system will be in the  $-$-phase. We use this observation to show that the low-temperature Gibbs measures of the Dyson model are not $g$-measures: their conditional probabilities w.r.t. the past are not necessarily continuous functions of this past. It was shown before that there exist $g$-measures which are not Gibbs measures \cite{FGM}; our result  answers  a question raised in \cite{FM1} and shows that neither class of measures contains the other one.  Although the question had been posed before, it seems to be the case that  there were no precise conjectures whether these Dyson Gibbs measures actually were $g$-measures or not. We thus elucidate  a somewhat unclear situation, about the connection between  two similar-looking notions,  originating in  two different fields of research (namely Mathematical Statistical Mechanics and Dynamical Systems). 

{\bf Warning:} The case $\alpha =2$ is somewhat different; as the fluctuations in the location of the interface are macroscopic, rather than mesoscopic \cite{CMPR}, our arguments do not fully work in that case. We also note that the proof(s) and even the properties of the phase transition for this borderline case had already required a special treatment before. The model gives rise to a more complex situation in which an intermediate phase arises \cite{IN}, and also a discontinuity  of the critical magnetization occurs \cite{ACCN}.

\section{Definitions, notation and main result}

\subsection{Dyson models}
We consider Ising spins for configurations $\omega \in \{-1,+1\}^\Z$ which have a ferromagnetic long-range pair interaction, with decay parameter $1 < \alpha < 2$, of the (formal) form:
$$
H(\omega) = -\sum_{i,j \in \Z} |i-j|^{-\alpha}\omega_i \omega_j.
$$
It has been known since \cite{Dys1} (and later \cite{Dys2, frsP} for $\alpha=2$) that these models at low temperature display a phase transition. There  are a non-zero spontaneous magnetisation $m=m(\alpha,\beta) >0$ and two (extremal) Gibbs measures $\mu^{+}$ and~$\mu^{-}$ obtainable by $+$- or $-$-boundary conditions, such that $\mu^{\pm} = \pm m$ \cite{Joh,FILS,CFMP,ACCN}. It is also known that  there are no non-translation-invariant extremal Gibbs measures (\cite{Geo}, Theorem 9.5). This is usually interpreted as  the absence of interface on a microscopic scale. 
However, at mesoscopic scales,  in between the microscopic and the macroscopic scales, interfaces still  may be identified \cite{CMPR}. 

To be specific, but without loss of generality, we will consider the plus measure $ \m^{+}$, obtained e.g. by taking the weak limit with the homogeneous $+$-boundary conditions.  A similar analysis could be performed for  the minus measure $ \m^{-}$, similarly obtained  by taking the limit with the homogeneous $-$-boundary conditions. In the regime we consider, those are the only two extremal Gibbs measures.

We will also consider  Dobrushin boundary conditions, where the spin of the sites outside the interval is minus to the left and plus to the right, i.e. $\omega_i=+1$ if $i\ge 0$ and $\omega_i=-1$ if $i<0$. It is known that in that case, when we consider the box $\Lambda_L=[-L,L]$, there are $2L+2$ ground states. This differs from the $+$- and$-$- boundary conditions, for which there is only one ground state. The (mostly finite-volume) Gibbs measures obtainable by Dobrushin boundary conditions will be denoted by  ``$\mu^{-+}$''.

\subsection{Gibbs measures and $g$-measures}
\subsubsection{General definitions and main result}

 In Mathematical Statistical Mechanics, in the framework\footnote{The so-called DLR approach as described also for example in \cite{EFS,FV,Geo, Isr1, Preston}.} initiated by Dobrushin \cite{Dob}, and Lanford and Ruelle \cite{LaR}, Gibbs measures at infinite volume are probability measures, defined by conditional probabilities\footnote{When not described more precisely, conditional probabilities always are defined only almost-surely.}, conditioned on (sets of) configurations on  the outside of finite sets $\Lambda$.  On the exterior, that is the complement of $\Lambda$,  boundary conditions are frozen to provide  within the finite volume the corresponding Boltzmann-Gibbs weights in terms of Hamiltonians, in the sense that one has for all configurations $\omega^1, \omega^2$ and ($\mu$-a.e.) boundary conditions $b$ $\in \{-1,+1\}^{\Z}$, 
 
$$
\frac{\mu(\omega^{1}_{\Lambda} | b_{\Lambda^c})}{\mu(\omega^{2}_{\Lambda}|b_{\Lambda^c})} = e^{-\beta [H(\omega^{1}_{\Lambda}b_{\Lambda^c}) - H(\omega^{2}_{\Lambda} b_{\Lambda^c})]}.
$$

As we consider Ising spins, which are discrete as well as compact, continuity 
(in the product topology) coincides with quasilocality. Quasilocal functions  are uniform  limits of local (cylinder) functions and  quasilocal measures are those measures whose conditional probabilities w.r.t. the outside of finite sets always admit a regular version that is continuous as a function of the boundary condition. Up to  a ``non-nullness'' or ``finite-energy'' condition, Gibbs measures {\em are} the quasilocal measures. See e.g. \cite{EFS, Geo, Ko, Su}. In fact, in the context of possibly non-Gibbsian renormalized Gibbs measures \cite{EFS, ELN}, the major characterisation used of the latter was precisely the lack of this quasilocality property (as well as the main drawback, preventing many standard results).

In our one-dimensional setting,  a basis of neighborhoods for a configuration $\omega$ in the configuration space $\Omega:=\{-1,+1\}^\Z$ can be chosen of the form 
$$
\mathcal{N}_L(\omega)=\left\{ \sigma \in \Omega : \sigma_{\Lambda_L}=\omega_{\Lambda_L},\; \sigma_{\Lambda_L^c} \; {\rm arbitrary} \right\},\;L \in \N,\;
$$
where  $\Lambda_L:=[-L,L]$ is the set $\{-L,-L+1,\ldots,L-1,L\}$, and $\omega_{\Lambda_L}$ the restriction of $\omega$ to the sites in $\Lambda_L$.
For any integers $N > L$, we shall also consider particular open subsets of neighborhoods 
$\mathcal{N}_{N,L}^+(\omega)$ (resp. $\mathcal{N}_{N,L}^-(\omega)$) on which the configuration is $+$ (resp. $-$) on the annulus $\Lambda_N \setminus \Lambda_L $ for $N > L$:
$$
\mathcal{N}_{N,L}^+(\omega) = \left\{ \sigma \in \mathcal{N}_L(\omega) : \sigma_{\Lambda_N \setminus \Lambda_L}  = +_{\Lambda_N \setminus \Lambda_L}\right\} \; \left({\rm resp.} \; \mathcal{N}_{N,L}^-(\omega)\right),
$$
where for $\Lambda \subset \Z$, $+_{\Lambda}$ is the configuration in $\Lambda$ in which all the spins are plus. Similarly we define the one-sided equivalent objects, such as $\mathcal{N}_{N,L}^{+,left}(\omega)$ (resp.  $\mathcal{N}_{N,L}^{-,left}(\omega)$) when the $N$ spins to the left of the interval $\Lambda_L$ are constrained to be plus (resp. minus).\\

%\bfblue{
%Similarly, considering the set $\Omega^-:=\{-1,+1\}^{\Z^*_-}$, where $\Z^*_-=\{\ldots,-2,-1\}$ is the set of negative numbers, we define
%$$
%\mathcal{N}^{left}_L(\omega)=\left\{ \sigma \in \Omega^- : \sigma_{\Lambda^-_L}=\omega_{\Lambda^-_L},\; \sigma_{(\Lambda_L^-)^c} \; {\rm arbitrary} \right\},\;L \in \N,\;
%$$
%where $\Lambda^-_L:=[-L,-1]$, and for any intergers $N>L$, the configuration is $+$ (resp. $-$) on the annulus $\Lambda^-_N \setminus \Lambda^-_L $ is given by
%$$
%\mathcal{N}_{N,L}^{+,left}(\omega) = \left\{ \sigma \in \mathcal{N}^{left}_L(\omega): \sigma_{\Lambda^-_N \setminus \Lambda^-_L}  = +_{\Lambda^-_N \setminus \Lambda^-_L}\right\} \; \left({\rm resp.} \; \mathcal{N}_{N,L}^{-,left}(\omega)\right),
%$$
%}

%Similarly we define the one-sided equivalent objects, such as $\mathcal{N}_{N,L}^{+,left}(\omega)$ (resp. $\mathcal{N}_{N,L}^{-,left}(\omega)$) when the $N$ spins to the left of the interval $\Lambda_L$ are constrained to be plus (resp. minus).

Considering the lattice $\Z$ as a bi-infinite sequence of ``times'', it is tempting to consider measures on $\Omega$ as stochastic processes (and to transfer the Gibbs property to some Markovian-like or almost-Markovian property). This equivalence holds in particular under conditions of weak coupling, such as when a Dobrushin uniqueness condition holds, for example for long-range Dyson models at high temperature, as well as for short-range models in which the coupling between two infinite half-lines is uniformly bounded. In the latter case the equivalence holds at all temperatures.
 However, it is far from obvious if such a description  is always easily possible (see e.g. \cite{FM1,FM2,FGM}). In fact, the non-equivalence between one-sided and two-sided conditionings, which we will demonstrate in detail  later, serves as a warning to a too easy identification. Gibbs measures in dimension one are thus those measures for which there exists a family -- called a ``specification" \cite{Geo} -- of continuous probability kernels $\gamma_L$ with $L \in \N$ which prescribes its (regular) conditional probabilities jointly w.r.t. the past and future via $
\mu \big[\omega_{\Lambda_L} | \omega_{\Lambda_L^c}] = \gamma_L(\omega)$.
Or, in a more Markovian-like description, 
%\bfblue{
\begin{equation}\label{twosided}
\mu \big[\sigma_{-L}=\omega_{-L},\ldots, \sigma_{L}=\omega_{L} | \ldots \sigma_{-L-1}=\omega_{-L-1}, \sigma_{L+1}=\omega_{L+1}, \ldots ] = \gamma_L(\omega).
\end{equation}
%}
%\footnote{ This asymptotic  two-sided Markov property is called the {\em Local Markov Property} (\cite{Foll, Gol, Isr2,vW}).} 
Thanks to their quasilocality properties,  Gibbs measures are the non-null measures for which the $\gamma_L$ are continuous functions of $\omega$.  In this case, it is possible to reconstruct all the conditional probabilities (\ref{twosided}) from the single-site conditional probabilities at time $0$, given for $\mu$ a.e. $\omega$ by
$$
\gamma_0(\omega):= \E_\mu \big[\sigma_0  | \mathcal{F}_{\{0\}^c}\big](\omega) = \E_\mu \big[\sigma_0  | \mathcal{F}_{\{<0\} \cup \{>0\}} \big](\omega)
$$
or, more shortly
$
\gamma_0(\omega):= \mu \big[\sigma_0  |\mathcal{F}_{\{0\}^c} \big](\omega) = \mu \big[\sigma_0  | \mathcal{F}_{\{<0\} \cup \{>0\}}\big](\omega)
$. Here $\mathcal{F}_{\{<0\} \cup \{>0\}}=\mathcal{F}_{\{0\}^c}$ denotes the $\sigma$-algebra generated by the past and the future. We shall encounter  later  the past and future $\sigma$-algebras $\mathcal{F}_{\{<0\}}$ and $\mathcal{F}_{\{>0\}}$  generated by the projections indexed by negative and positive integers. The function $\gamma_0$ is a $\mathcal{F}_{\{0\}^c}$-measurable function and when the measure is a Gibbs measure, this function is continuous,  jointly in  past and future.

\smallskip
 In Dynamical Systems, $g$-measures are defined in a similar  way,  combining  topological and measurable notions, but the transition functions (the ``$g-$''functions) now have to be continuous functions of the past only. One  requires continuity of single-site one-sided conditional probabilities and says that $\mu$ is a $g$-measure if there exists a (past-measurable) {\em continuous}  and non-null  function $g_0$ which gives ``one-sided'' conditional probabilities, that is   non-null conditional probabilities for  events localised on the right halfline (``future"), given a boundary condition  fixed only to the left (``past"). 
\begin{definition}
A probability measure is a $g$-measure, if there is a non null continuous function $g_0 > 0$, defined on the left (``past'')  half-line configuration space, such that, for each $\omega_0\in \{-1,+1\}$ and $\mu$ a.e. $b=(b_j)_{j<0}\in \{-1,+1\}^{(-\infty,0)}$,
\be \label{g}
 \mu[\omega_0|\mathcal{F}_{<0}](b):= \E_\mu \big[\mathbbm{1}_{\sigma_0=\omega_0} | \mathcal{F}_{\{<0\}} \big](b) = g_0(b \omega_0).
\ee
\end{definition}
In this situation, the function $g:=g_0$ is called a $g$-function. For translation-invariant measures, it is extended to any site $i$ with conditional probabilities w.r.t. to the past at site $i$ given by $g_i=g$,  while in the absence of translation invariance, other functions $g_i$'s are introduced to get  $G$-measures \cite{BrDoo, BrDoo2}.
 The complete formalism -- providing {\em all} conditional probabilities w.r.t. to the past --  can be restored under  extra conditions via the notion of a ``Left Interval  Specification'' (LIS) \cite{FM1,FM2}.  We  focus here on the single-site properties that define $g$-functions and $g$-measures, in a translation-invariant context.

Note that such extensions of (one-sided) Markov properties have been studied under different names in various areas of mathematics for a long time, such as {\em Chains with infinite connections} \cite{Be}, {\em Chains of infinite order} \cite{Har}, {\em  Variable Length Markov Chains} \cite{GL}, {\em uniform martingales} \cite{Kal} etc.
For a number of papers addressing $g$-measures and  related properties, see e.g.
\cite{BHS, BFV, BK, BrDoo, BrDoo2, DF, Fr, GGT, GP, Hul1,Hul,JOP1,JOP, Ver}.
When the interactions are finite-range, $g$-measures are Markov chains. These coincide with Gibbs measures, which then are Markov fields, expressible in two-sided conditional probabilities, see e.g. \cite{Geo}, Chapter 3. In fact, this equivalence applies for a large class of interactions which satisfy a strong uniqueness condition \cite{FM1,FM2}. However, if we require only continuity of the conditional probabilities, there exist $g$-measures which are {\em not} Gibbs measures \cite{FGM}.

In general, there is not that much known in the phase transition region, where the interactions are necessarily long-range. Phase transitions in the Gibbs measure context have been known to occur since Dyson, and in the  $g$-measure context they are also known to be possible \cite{BK,BHS,DF, Fr, Hul}. Nevertheless, there seems little known about the equivalence of the Gibbs measure property and the $g$-measure property in any such general context. In higher dimension, one could interpret  the ``Local Markov Property'' as a Gibbs property and the ``Global Markov Property''  (see e.g. \cite{Foll}) to some extent as the equivalent of the $g$-measure property. It is known that there are measures having the Local, but not the Global Markov Property \cite{Gol,Isr2,vW}. Here we will show the somewhat analogous result that the Gibbs measures of the Dyson model are not $g$-measures.

Discontinuity of any candidate $g^+$ to represent a $g$-function for $\mu^+$ -- i.e. discontinuity of any possible version of a suitable chosen conditional probability -- will be a consequence of the next lemma, proved using an entropic repulsion phenomenon, which we obtain  as a fairly direct corollary of  the interface localisation result of \cite{CMPR}. To use these results of Cassandro {\em et al.},  we will require the  same technical lower bound $\alpha_+=3 - \frac{\log{3}}{\log{2}} \in ]1,2[$ as they needed. In the following lemma, $\mu^{+,\omega}_{\mathbb{Z}_+}[\cdot]$ denotes expectations under a constrained measure $\mu^{+,\omega}_{\mathbb{Z}_+}$, defined in the next section.

%{\bf below, i propose to rephrase Theorem 1 and i still do not get the need of Theorem 2 for the moment so i laft it aside. I let the old version between *** in case}

%***********{\bf Older version}******************\\
%\begin{theorem}\label{thm:main}
%Consider the alternating sequence $(\omega'_i)_{i\in \mathbb{Z}}$ defined by $\omega'_i=(-1)^i$, and $\alpha^*<\alpha<2$. Then for the Dyson model at low temperature in  either plus or minus state, the following one-sided conditional probability  $\mu^+(\omega_0=+|\cdot)$ is discontinuous at $(\omega'_i)_{i\in \mathbb{Z}}$.
%\end{theorem}
%}

%\bfblue{
%From Theorem \ref{thm:main}, we conclude straightforwardly that the Gibbs measures of the Dyson model are not $g$-measures by the following result (see \cite{FGM,Kal}).
%\begin{theorem}
%A probability measure $\mu$ on $(\Omega,\mathcal{F})$ is a $g$-measure if and only if it is non-null, shift-invariant and the sequences below, defined for $n\ge 1$,
%$$
%\mu(\sigma_0=\omega_0| \sigma_{-n}=\omega_{-n},\ldots,\sigma_{-1}=\omega_{-1})
%$$
%converge uniformly as $n\to \infty$.
%\end{theorem}
%}

%************{\bf New version}*****************************\\
%The Dyson model at low temperatures with plus (or minus) boundary condition 

 \begin{lemma}\label{keylemma}
Consider the  alternating configuration $\omega_{\rm alt}=\big((\omega_{\rm alt})_i\big)_{i\in \mathbb{Z}}$ defined by $(\omega_{\rm alt})_i=(-1)^i$, and take a Dyson model with polynomial decay $\alpha_+ <\alpha<2$ at sufficiently low temperature. 
%Let $L<N$ and consider two arbitrary  configurations $\omega^+ \in \mathcal{N}_{N,L}^{+,left}(\omega_{\rm alt})$ and $\omega^{-} \in \mathcal{N}_{N,L}^{-,left}(\omega_{\rm alt}) $. 
Then, there exist $L_0\ge 1$ and $\delta >0$ such that for any  $L>L_0$ there is an $N > L$, with $LN^{1-\alpha}=o(1)$, such that for every two configurations $\omega^+ \in \mathcal{N}_{N,L}^{+,left}(\omega_{\rm alt})$ and $\omega^{-} \in \mathcal{N}_{N,L}^{-,left}(\omega_{\rm alt}) $,
%such that, 
%for all $\omega^+ \in T^{-1} \{\omega'^+\}$ and all $\omega^- \in T^{-1} \{\omega'^-\} $ 
\be \label{keymagn}
\left| \mu^{+,\omega^+}_{\mathbb{Z}_+}[\sigma_0] -  \mu^{+,\omega^-}_{\mathbb{Z}_+}[\sigma_0] \right| > \delta.
\ee
\end{lemma}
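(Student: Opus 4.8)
The plan is to show that the two one‑sided magnetizations lie on opposite sides of $0$, each bounded away from it, so that their difference exceeds a fixed $\delta$. Note first that the alternating buffer on $[-L,-1]$ is chosen precisely because it is \emph{neutral}: the nearest buffer spin $(\omega_{\rm alt})_{-1}=-1$ aside, its net field on $\sigma_0$ essentially cancels, so the sign of $\mu^{+,\omega^\pm}_{\mathbb{Z}_+}[\sigma_0]$ will be decided by the $\pm$-block and not by a biased buffer. For each $L$ I choose $N=N(L)$ with $LN^{1-\alpha}\to0$ (e.g.\ $N=\lceil L^{2/(\alpha-1)}\rceil$, which satisfies $N>L$ and $LN^{1-\alpha}=L^{-1}$). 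A first reduction removes the part of the conditioning left unspecified by $\mathcal{N}_{N,L}^{\pm,left}(\omega_{\rm alt})$, namely everything to the left of $-L-N$: the interaction energy coupling $\Lambda_L$ to the sites beyond the block (all at distance at least $N$) is of order $\sum_{i\in\Lambda_L}\sum_{d\ge N}d^{-\alpha}=O(LN^{1-\alpha})$, so a finite‑energy (telescoping) comparison between the two extreme far‑left configurations changes $\mu^{+,\omega}_{\mathbb{Z}_+}[\sigma_0]$ by at most $o(1)$, uniformly. This is the first role of $LN^{1-\alpha}=o(1)$; afterwards only the block and the buffer matter.

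On the $+$-side the inserted block agrees with the ambient phase, so no interface is created and the effect is purely local. Together with the $+$-phase on the future half‑line $[1,\infty)$ — which exerts a strong positive field on $\sigma_0$ — and the neutral buffer, a monotonicity (FKG) and energy estimate keeps $\sigma_0$ in the $+$-phase, giving $\mu^{+,\omega^+}_{\mathbb{Z}_+}[\sigma_0]\ge c'>0$ (in fact close to $+m$). The contrast with the $\omega^-$-case is the whole point: there the block‑induced field on $\sigma_0$ is, by itself, weak (of order $L^{1-\alpha}\to0$), so a naive single‑site argument would wrongly predict that $\sigma_0$ stays plus in both cases. It is a collective, not a local, mechanism that flips it.

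That collective mechanism is entropic repulsion, and it is where \cite{CMPR} enters and where I expect the real difficulty. In the $\omega^-$-case the frozen minus block of length $N$ is a hard wall inserted into the $+$-phase, and the claim is that it is \emph{completely wet}: a minus layer of mesoscopic width forms on its right and engulfs the origin. To make this quantitative I would recast the region to the right of the block, with the ambient $+$-phase further out, as an instance of the Dobrushin‑type setup of \cite{CMPR}; their interface‑localisation result says that the interface separating the minus region anchored by the block from the outer plus region does not lie against the wall but is pushed out by a mesoscopic wetting length. The hypothesis $LN^{1-\alpha}=o(1)$, i.e.\ $L\ll N^{\alpha-1}$, is exactly the statement that the buffer width $L$ is small compared with this wetting length, so with overwhelming probability the interface lies to the right of $\sigma_0$; hence $\sigma_0$ is typically in the minus region and $\mu^{+,\omega^-}_{\mathbb{Z}_+}[\sigma_0]\le -c<0$.

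Combining the two sides gives $\big|\mu^{+,\omega^+}_{\mathbb{Z}_+}[\sigma_0]-\mu^{+,\omega^-}_{\mathbb{Z}_+}[\sigma_0]\big|\ge c+c'=:\delta>0$, uniformly over admissible $\omega^\pm$ and for all $L>L_0$, as soon as the temperature is low enough that $m$ is bounded away from $0$. The soft parts — the far‑field reduction and the $+$-side — rest only on finite‑energy bounds and FKG. The weight of the argument is carried by the wetting step: translating the finite‑volume, single‑point interface statement of \cite{CMPR} into the assertion that the minus layer around the frozen block reliably has width exceeding $L$, with the repulsion/fluctuation length correctly matched to the scale $N^{\alpha-1}$ fixed by $LN^{1-\alpha}=o(1)$. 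This is the main obstacle.
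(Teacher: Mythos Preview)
Your strategy matches the paper's: reduce the far-left dependence using $LN^{1-\alpha}=o(1)$, handle the $\omega^+$ side by FKG/monotonicity, and handle the $\omega^-$ side by the entropic repulsion (complete wetting) derived from the interface localisation of \cite{CMPR}. The paper's proof follows exactly this outline.

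The one place where the paper is more explicit than your sketch is the treatment of the alternating buffer. You call it ``neutral'' and move on, but in the wetting step you need to insert this buffer into a Dobrushin-type picture to which the CMPR estimates apply, and ``net field on $\sigma_0$ cancels'' is not by itself enough: the buffer also interacts with the whole half-line carrying the interface. The paper handles this by first \emph{decoupling} the alternating interval $[-L_1,-1]$ from its complement (removing all bonds across its boundary), applying the CMPR/wetting argument to the decoupled system (where the buffer is literally inert), and then \emph{recoupling}. The recoupling is controlled by a separate lemma showing that $\sum_{j\notin[-L_1,-1]}\sum_{i\in[-L_1,-1]}(-1)^i|i-j|^{-\alpha}\omega_j$ is bounded \emph{uniformly in $L_1$}; this is the precise content of ``neutrality'' and is what lets a Bricmont--Lebowitz--Pfister finite-energy argument transfer the phase identification back to the original measure. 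Note also that the paper runs three scales, not two: buffer width $L_1$, wetting/CMPR length $L$ with $L_1=o(L)$, and block length $N$ with $LN^{1-\alpha}=o(1)$; your formulation collapses the first two by taking the wetting length of order $N^{\alpha-1}$, which is fine but worth making explicit.
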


As a corollary, we obtain our main result:

\begin{theorem}\label{thm:main}
 For $\mu$ being either the plus or the minus  phase of  a Dyson model with exponent $\alpha_+ <\alpha<2$ at sufficiently low temperature, the one-sided conditional probability  $\mu[\omega_0|\mathcal{F}_{<0}](\cdot)$ is essentially discontinuous at $\omega_{\rm alt}$. Therefore, none of the Gibbs measures $\mu$  for the Dyson model in this phase transition region\footnote{Note that we again impose  the technical restriction $\alpha_+ < \alpha <2$ on the decay parameter.} is a $g$-measure.
\end{theorem}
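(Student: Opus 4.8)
The plan is to derive Theorem~\ref{thm:main} as a direct corollary of Lemma~\ref{keylemma}, by translating the magnetization gap in \eqref{keymagn} into essential discontinuity of the one-sided conditional probability $\mu[\omega_0|\mathcal{F}_{<0}](\cdot)$ at the configuration $\omega_{\rm alt}$. First I would recall that, for a $g$-measure, the function $g_0$ of \eqref{g} must be a genuinely continuous (and non-null) function of the past configuration $b=(b_j)_{j<0}$. Essential discontinuity at $\omega_{\rm alt}$ means exactly the negation of this: there is no version of the conditional expectation $\mu[\sigma_0|\mathcal{F}_{<0}]$ that is continuous at $\omega_{\rm alt}$, because every version must disagree by at least a fixed amount on arbitrarily small neighborhoods of $\omega_{\rm alt}$. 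So the strategy is to exhibit two sequences of past configurations converging to $\omega_{\rm alt}$ along which the conditional magnetization at the origin stays separated by $\delta$.

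The key step is to connect the constrained finite-volume measures $\mu^{+,\omega}_{\mathbb{Z}_+}$ to the actual one-sided conditional probabilities of $\mu^+$. Given a past configuration $b$, I would approximate the genuine conditional probability $\mu^+[\sigma_0 | \mathcal{F}_{<0}](b)$ by freezing the past to $b$ and taking the plus-phase distribution on the future halfline $\mathbb{Z}_+$; this is precisely what $\mu^{+,\omega}_{\mathbb{Z}_+}$ is meant to encode. The configurations $\omega^+ \in \mathcal{N}_{N,L}^{+,left}(\omega_{\rm alt})$ and $\omega^- \in \mathcal{N}_{N,L}^{-,left}(\omega_{\rm alt})$ both agree with $\omega_{\rm alt}$ on $\Lambda_L$ but differ by a long run of $+$ versus $-$ on the annulus to the left, and then are arbitrary further out. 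For each fixed $L > L_0$, letting $L \to \infty$ produces two sequences of pasts, both converging to $\omega_{\rm alt}$ in the product topology (since they agree with $\omega_{\rm alt}$ on the growing central block $\Lambda_L$), yet Lemma~\ref{keylemma} guarantees their conditional magnetizations differ by more than $\delta$ uniformly. Hence no continuous version can exist, establishing essential discontinuity.

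The remaining routine steps are to verify that both neighborhood-types $\mathcal{N}_{N,L}^{\pm,left}(\omega_{\rm alt})$ have positive $\mu^+$-measure, so the conditional probabilities are genuinely defined and not artifacts of a null set, and to confirm that essential discontinuity of $\mu^+$ at $\omega_{\rm alt}$ rules out the $g$-measure property: if $\mu^+$ were a $g$-measure, the $g$-function would be a continuous representative, contradicting essential discontinuity. By the spin-flip symmetry relating $\mu^+$ and $\mu^-$, the same conclusion holds for $\mu^-$, covering both extremal phases. Since these are the only two extremal Gibbs measures in the regime under consideration, and the $g$-measure property would have to pass to extremal components, no Gibbs measure of the Dyson model in this region is a $g$-measure.

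The main obstacle I anticipate is the passage from the constrained measures $\mu^{+,\omega}_{\mathbb{Z}_+}$ to the actual one-sided conditional probabilities of the infinite-volume $\mu^+$, and the precise sense of essential discontinuity. One must be careful that the discontinuity is \emph{essential}—not removable by choosing a better version on a null set. This requires showing that the two families of pasts are not merely convergent but occupy sets of positive measure accumulating at $\omega_{\rm alt}$ from ``both sides,'' so that any version of the conditional probability is forced to take both separated values with positive probability in every neighborhood. Handling this robustly is where the bulk of the careful measure-theoretic argument will lie, whereas the geometric heart—the entropic repulsion creating the wet $-$-regions—is already supplied by Lemma~\ref{keylemma}.
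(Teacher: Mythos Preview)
Your proposal is correct and matches the paper's approach: the paper also derives Theorem~\ref{thm:main} directly as a corollary of Lemma~\ref{keylemma}, using the two sub-neighborhoods $\mathcal{N}_{N,L}^{\pm,left}(\omega_{\rm alt})$ (which are open cylinder sets, hence of positive $\mu^+$-measure by non-nullness) to force any version of $\mu^+[\sigma_0|\mathcal{F}_{<0}]$ to oscillate by at least $\delta$ in every neighborhood of $\omega_{\rm alt}$. The one point worth noting is that what you flag as ``the main obstacle''---identifying $\mu^{+,\omega}_{\mathbb{Z}_+}[\sigma_0]$ with the genuine one-sided conditional probability---is not left as an open problem in the paper but is handled by the global-specification machinery of Theorem~\ref{thmGlob} (from Fern\'andez--Pfister), which gives precisely $g^+(\omega)=\mu^{+,\omega}_{\mathbb{Z}_+}[\omega_0]$ as in \eqref{gplus}; so that step is already rigorous and requires no further measure-theoretic work beyond invoking it.
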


{\bf Remark 1:} We use the term Gibbs measure in the Statistical Mechanics sense, as defined by Dobrushin, Lanford and Ruelle \cite{Dob,LaR}. In the Dynamical Systems community, often a somewhat different notion of Gibbs measure is defined following Sinai, Ruelle and Bowen \cite{Sin,Rue,Bow}, 
 by providing uniformly bounded approximations of the measure on cylinders as  exponential Boltzmann-Gibbs weights defined via (a  slightly different notions) potentials.  In Symbolic Dynamics, yet another notion is introduced either via Perron-Frobenius operators or via variational principles  and a corresponding notion of equilibrium states. Compare e.g. \cite{BFV} with sometimes different (non-)lattices, and again different notions of potentials compared to the ones used in Mathematical Statistical Mechanics. This yields different, typically more restrictive, classes of measures, in which phase transitions are  usually excluded due to the corresponding interaction being too short-range (in statistical mechanics terms). 

{\bf Remark 2:} As discussed in \cite{FM1,FM2}, which discuss a lot of the history, the terminology ``$g$-measures'' was introduced by Keane \cite{Ke}, but the notion is older.  In those papers also the observation is made and exploited  that the $g$-measure property is a kind of one-sided Gibbs property. However, this analogy appears to work properly mostly in various uniqueness regimes, as we illustrate here.

\subsubsection{Gibbs vs $g$-measures for Dyson models in the Phase Transition region}

To be more specific, we consider configurations lying in the infinite probability space $(\Omega,\mathcal{F},\rho)=(E,\mathcal{E},\rho_0)^\Z$ where $E=\{-1,+1\}$ is equipped with the a priori product measure $\rho_0 = \frac{1}{2} \delta_{-1} + \frac{1}{2} \delta_{+1}$.  For a configuration $\omega \in \Omega$ and any $\Lambda \subset \Z$, we consider the restriction  $\omega_\La$ and the corresponding configuration spaces at volume $\Lambda$ as the product probability spaces $(\Omega_\Lambda, \mathcal{F}_\La,\rho_\La)$ defined in a standard way. To specify the two-sided conditional probabilities of our Dyson measures,  we consider the set $\mathcal{S}$ of finite subsets of $\Z$ and introduce the following, in  particular Gibbsian, specification (see e.g. \cite{Fer, FM3, Geo,Gold,Preston,Sok} for more details about specifications):
\begin{definition}
Let $\beta >0$ be the inverse temperature. We call a {\em Dyson specification} the collection of probability kernels $\gamma^D=(\gamma_\La^D)_{\La \in  \mathcal{S}}$ { from $\mathcal{F}_{\Lambda^c}$ to $\Omega_\La$ defined} by
\be \label{LRDysonSpe}
\gamma_\Lambda^D(d \omega | \tau) = \frac{1}{Z_\Lambda^\tau} \; e^{\beta \sum_{i \neq j, i \in \La, j \in \Z} \frac{1}{|i-j|^\alpha} \omega_i \omega_j} \; \rho_\Lambda \otimes \delta_{\tau_{\La^c}} (d \omega)
\ee
where the normalization $Z_\La^\tau$ is the usual partition function.
\end{definition}

This specification is {\em monotonicity-preserving} (or FKG): for all $\La \in \mathcal{S}$ and any $f$ bounded increasing,   so is  $\gamma_\La^D f$. The extremal (maximal and minimal) elements of this partial order ``$\leq$" already allow us to define the extremal elements of  $\mathcal{G}(\gamma^D)$:
\begin{proposition}\label{DyFrSp}\cite{Dys1,FP,frsP, Hul1} The weak limits
\be \label{muplusminus}
\mu^-(\cdot) := \lim_\La \gamma_\La^D (\cdot | -)\; \; {\rm and} \; \; \mu^+(\cdot) := \lim_\La \gamma_\La^D (\cdot | +)
\ee
are well-defined, translation-invariant and extremal elements of $\mathcal{G}(\gamma^D)$. For any $f$ bounded increasing, any other measure $\mu \in \mathcal{G}(\gamma^D)$ satisfies
\be \label{stochdom}
\mu^-[f] \leq \mu[f] \leq \mu^+[f].
\ee
%Moreover, $\mu^-$ and $\mu^+$  are respectively left-continuous and right-continuous.
%\end{proposition}
%Putting together results of Dyson \cite{Dys} and Fr\"olich/Spencer \cite{frsP}, one gets the following phase transition result:
%\begin{proposition}
%{\blue Morover}
For  longer ranges $1< \alpha \leq 2$, a phase transition holds for (\ref{LRDysonSpe}): There exists $\beta_c^D >0$ such that, for all $\beta > \beta_c^D$, we have
$\mu^- \neq \mu^+$   and moreover, at sufficiently low temperatures $\mathcal{G}(\gamma^D)=[\mu^-,\mu^+]$.
\end{proposition}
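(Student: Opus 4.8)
The plan is to assemble the statement from the ferromagnetic (FKG) structure of the specification $\gamma^D$ together with the classical long-range phase-transition results, treating the assertions in increasing order of difficulty. First I would produce the weak limits: since $\gamma^D$ is monotonicity-preserving, consistency gives, for $\Lambda \subset \Lambda'$ and any bounded increasing local $f$, the identity $\gamma_{\Lambda'}^D(f \mid +) = \int \gamma_{\Lambda'}^D(d\omega \mid +)\,\gamma_\Lambda^D(f \mid \omega) \le \gamma_\Lambda^D(f \mid +)$, where the inequality uses $\gamma_\Lambda^D(f \mid \omega) \le \gamma_\Lambda^D(f \mid +)$ for every $\omega$. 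Hence $\Lambda \mapsto \gamma_\Lambda^D(f \mid +)$ is nonincreasing and bounded, so it converges; since every local function is a difference of bounded increasing local functions, weak convergence on the compact space $\Omega$ follows, the limit $\mu^+ := \lim_\Lambda \gamma_\Lambda^D(\cdot \mid +)$ exists, and it lies in $\mathcal{G}(\gamma^D)$ by the usual consistency/dominated-convergence argument. The measure $\mu^-$ is obtained symmetrically, the corresponding net being nondecreasing. Translation invariance of $\mu^\pm$ follows because the interaction $|i-j|^{-\alpha}$, hence $\gamma^D$, is translation-covariant and the boundary conditions $+$ and $-$ are themselves translation-invariant, so taking the limit along the symmetric boxes $\Lambda_L$ transfers invariance to the limit.

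Next, the stochastic domination (\ref{stochdom}) is the FKG sandwich: for $\mu \in \mathcal{G}(\gamma^D)$ and increasing local $f$, writing $\mu[f] = \int \mu(d\tau)\,\gamma_\Lambda^D(f \mid \tau)$ and bounding $\gamma_\Lambda^D(f \mid \tau) \le \gamma_\Lambda^D(f \mid +)$ gives $\mu[f] \le \gamma_\Lambda^D(f \mid +) \to \mu^+[f]$, and the lower bound $\mu^-[f] \le \mu[f]$ is identical. Extremality of $\mu^+$ and $\mu^-$ then follows from their being the maximal and minimal elements of $\mathcal{G}(\gamma^D)$ in the FKG order: a maximal (resp. minimal) element is tail-trivial, hence an extreme point of the simplex $\mathcal{G}(\gamma^D)$, by the standard comparison of $\mu^+$ with its conditioning on a tail event together with maximality.

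The main obstacle is the phase transition, i.e.\ $\mu^+ \neq \mu^-$ for $1 < \alpha \le 2$ at low temperature. By spin-flip symmetry and the sandwich this is equivalent to a strictly positive spontaneous magnetization $m = \mu^+[\sigma_0] > 0$, which is Dyson's theorem \cite{Dys1}. Here I would invoke one of two robust routes: the reflection-positivity / infrared-bound estimate of Fr\"ohlich--Israel--Lieb--Simon \cite{FILS}, which bounds the two-point function from below and forces long-range order precisely in the range $1 < \alpha \le 2$, or a multiscale contour expansion of Fr\"ohlich--Spencer type adapted to one dimension, which controls the energy of a droplet boundary and yields a Peierls-type bound for $\alpha \le 2$; the borderline $\alpha = 2$ needs the separate treatment of \cite{frsP, ACCN}. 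Monotonicity of $m(\beta)$ in $\beta$ then produces the threshold $\beta_c^D$.

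Finally, the simplex identity $\mathcal{G}(\gamma^D) = [\mu^-, \mu^+]$ at low temperature requires ruling out further extreme points. By (\ref{stochdom}) every Gibbs measure is squeezed between $\mu^-$ and $\mu^+$; I would combine the absence of non-translation-invariant extremal states (\cite{Geo}, Theorem 9.5) with the characterization of the translation-invariant ergodic Gibbs measures of the Dyson model as being only $\mu^+$ and $\mu^-$, following \cite{FP, Hul1}; a general Gibbs measure is then a mixture of these two, i.e.\ a point of the segment. I expect this last step and Dyson's transition to carry essentially all the mathematical weight, the first four assertions being soft consequences of monotonicity.
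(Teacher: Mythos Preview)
Your sketch is essentially correct, but note that the paper does not actually prove this proposition: it is stated with references to \cite{Dys1, FP, frsP, Hul1} and then used as a black box. There is therefore no ``paper's own proof'' to compare against; the proposition is quoted from the literature.

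That said, your assembly of the argument matches what one finds in those references. The existence of the monotone limits $\mu^\pm$, the FKG sandwich (\ref{stochdom}), and extremality via maximality/minimality are the standard attractive-specification arguments (as in \cite{FP, Hul1, Geo}). For the phase transition you correctly point to Dyson \cite{Dys1} and the infrared-bound route \cite{FILS} for $1<\alpha<2$, with \cite{frsP} handling the borderline $\alpha=2$. One small refinement: for the identification $\mathcal{G}(\gamma^D)=[\mu^-,\mu^+]$, the cleanest route is not to first rule out non-translation-invariant extremal states and then classify the translation-invariant ones, but rather to use directly that in an attractive setting every extremal Gibbs measure is sandwiched between $\mu^-$ and $\mu^+$ and that, by a tail-triviality argument combined with FKG, any extremal $\mu$ must coincide with one of $\mu^\pm$ on increasing events once there are only two pure phases; this is the content of the Fern\'andez--Pfister and Hulse results you cite. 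Your two-step decomposition works too, but it imports more than is needed.
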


%\subsection{Global Specification for the Dyson ferromagnet}

To get a candidate to represent the $g$-functions, i.e. the conditional probabilities w.r.t. the past, one needs to extend (\ref{LRDysonSpe}) to possibly {\em infinite} sets {$S$}, because the complement of the past  -- our future --  is infinite. Although we are far from the uniqueness regime, this has nevertheless been shown to be possible in our context following a general construction of \cite{FP}, made for  {\em attractive and right-continuous}\footnote{Right- or left-continuity corresponds to ``continuity in the direction $+$ or $-$'', see e.g. \cite{MRVM}.} specifications.
\begin{definition}\label{Glob}
A ``Global Specification'' $\Gamma$ on $\Z$ is a family of probability kernels $\Gamma=(\Gamma_S)_{S \subset \Z}$ on $(\Omega,\mathcal{F})$ from $\mathcal{F}_{S^c}$ to $\Omega_S$ such that for {\em any} $S$ subset of $\Z$:
\begin{enumerate}
\item $\Gamma_S(B|\omega)=\mathbf{1}_B(\omega)$ for all $\omega \in \Omega$ when $B \in \mathcal{F}_{S^c}$.
\item For all $S_1 \subset S_2 \subset \Z$, $\Gamma_{S_2} \Gamma_{S_1}=\Gamma_{S_2}$.
\end{enumerate}
%\end{definition}
%Similarly to the consistency with a (local) specification, one introduces the {\em compatibility of measures with a global specification}.
%\begin{definition}
%Let $\Gamma$ be a global specification
 We write $\mu \in \mathcal{G}(\Gamma)$ if for all $A \in \mathcal{F}$ and {\em any} $S \subset \Z$,
\be \label{DLR4}
\mu[A|\mathcal{F}_{S^c}](\omega)=\Gamma_S(A|\omega), \; \mu{\rm -a.e.} \;  \omega.
\ee
\end{definition}

\begin{theorem}\cite{FP,ELN}\label{thmGlob}
Consider the Dyson model on $\Z$ at inverse temperature $\beta >0$, i.e. the specification $\gamma^D$ given by (\ref{LRDysonSpe}) and its extremal Gibbs  measure $\mu^+$ defined by (\ref{muplusminus}). A global specification  $\Gamma^+$  such that $\mu^+ \in \mathcal{G}(\Gamma^+)$  can be given as follows :
\begin{itemize}
\item For $S=\Lambda$ finite, for all $\omega \in \Omega$, set $\Gamma^+_\Lambda(d \sigma | \omega) := \gamma^D_\Lambda (d \sigma | \omega).$
\item For $S$ infinite, for all $\omega \in \Omega$, set $\Gamma^+_S(d\sigma | \omega):=\mu_S^{+,\omega} \otimes \delta_{\omega_{S^c}}(d \omega)$
where $\mu_S^{+,\omega}$ is the constrained measure on $(\Omega_S,\mathcal{F}_S)$ defined  as the (well-defined) weak limit  
$$
\mu_S^{+,\omega}(d \sigma_S):=\lim_{\Delta \uparrow S} \gamma^D_\Delta (d \sigma\mid +_S \omega_{S^c}).
$$
\end{itemize}
A similar construction yields a global specification $\Gamma^-$ so that $\mu^- \in \mathcal{G}(\Gamma^-)$.
%with more explicitly 
%$$
%\gamma^I_\Delta (d \sigma | +_S \omega_{S^c}})= f(\sigma_\Delta +_{S \setminus \Delta} \omega_{\Delta^c}) \rho_\Lambda (d \sigma_\Delta) %\otimes \delta_{+_{S %\setminus \Delta}}(d +_{S \setminus \Delta}) \otimes \delta_{\omega_{S^c}})(d \sigma_{\Delta^c})}{Z_\Lambda(+_S \omega_{S^c})}}$$
%and
%$$
%f(\sigma_\Delta +_{S \setminus \Delta} \omega_{\Delta^c})= \frac{1}{Z_\Delta(\omega)} \; e^{\beta(\sum_{\langle ij \rangle \subset \Delta} \sigma_i \sigma_j + %\sum_{\langle ij \rangle,i \in \Delta, j \in S^c} \sigma_i \omega_j + \sum_{\langle ij \rangle,i \in S \setminus \Delta, j \in T^c} \omega_j + \sum_{\langle ij %\rangle,i \in \Delta, j \in S \setminus \Delta} \sigma_i }). 
%$$

%It is moreover monotonicity-preserving and right-continuous. Similarly, one defines a monotonicity-preserving and left-continuous global specification $\Gamma^-$ such that $\mu^- \in \mathcal{G}(\Gamma^-)$.
\end{theorem}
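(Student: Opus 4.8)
The statement is an instance, specialised to the Dyson kernels $\gamma^D$, of the Fernández--Pfister construction \cite{FP} of a global specification carried by the maximal phase of a monotonicity-preserving, right-continuous specification; I would organise the proof around verifying its two structural hypotheses and then its three conclusions. The FKG / monotonicity-preservation of $\gamma^D$ is the property recorded just after (\ref{LRDysonSpe}). Right-continuity (``continuity in the $+$ direction'') is in fact full quasilocality here: since $\alpha>1$ the pair interaction is absolutely summable, $\sum_{j\ne0}|j|^{-\alpha}=2\zeta(\alpha)<\infty$, so for every finite $\Lambda$ the single-site energies, hence $\gamma_\Lambda^D(\cdot\mid\cdot)$, depend continuously on the boundary condition, and the field $\sum_{j\in S^c}|i-j|^{-\alpha}\omega_j$ exerted on a site $i\in S$ by a frozen exterior configuration converges absolutely for \emph{any} $S$. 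This last point is exactly what makes the infinite-$S$ kernels meaningful.

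Next I would show that $\mu_S^{+,\omega}$ is well defined and is the maximal Gibbs measure inside $S$. For fixed $\omega$ and finite $\Delta\uparrow S$, the measures $\gamma_\Delta^D(\cdot\mid +_S\,\omega_{S^c})$ are stochastically decreasing in $\Delta$: enlarging $\Delta$ un-freezes some $+$ spins, and by FKG replacing frozen $+$ boundary spins by annealed ones can only lower the measure in the stochastic order (formally, from $\gamma_{\Delta'}^D=\gamma_{\Delta'}^D\gamma_\Delta^D$ together with $\gamma_{\Delta'}^D(\cdot\mid+\omega)\le_{\rm st}\delta_+$ on $\Delta'\setminus\Delta$ and the monotonicity of $\eta\mapsto\gamma_\Delta^D(\cdot\mid\eta)$). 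A stochastically monotone net on the compact space $\Omega_S$ converges weakly, which defines $\mu_S^{+,\omega}$. Because each $\gamma_{S_1}^D$ with $S_1\subset S$ finite is continuous, the DLR equation $\mu_S^{+,\omega}\gamma_{S_1}^D=\mu_S^{+,\omega}$ passes to the weak limit; thus $\mu_S^{+,\omega}$ is Gibbs for $\gamma^D$ inside $S$ with frozen exterior $\omega_{S^c}$, and being a $+$-limit it is the maximal such measure (by the stochastic domination of Proposition \ref{DyFrSp} applied to the restricted specification).

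I would then check that $\Gamma^+$ is a global specification in the sense of Definition \ref{Glob}. Properness (axiom 1) is immediate: for finite $S$ it is properness of $\gamma^D$, and for infinite $S$ it is carried by the factor $\delta_{\omega_{S^c}}$. For the consistency $\Gamma_{S_2}^+\Gamma_{S_1}^+=\Gamma_{S_2}^+$ with $S_1\subset S_2$, three cases arise. Both finite is the specification property of $\gamma^D$. If $S_1$ is finite and $S_2$ infinite it reduces, via the product form of $\Gamma_{S_2}^+$, to the invariance $\mu_{S_2}^{+,\omega}\gamma_{S_1}^D=\mu_{S_2}^{+,\omega}$ of the previous paragraph. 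The genuine case is $S_1\subset S_2$ both infinite, where one must prove the tower identity $\mu_{S_2}^{+,\omega}(\,\cdot\mid\mathcal{F}_{S_1^c})(\eta)=\mu_{S_1}^{+,\eta}$ for $\mu_{S_2}^{+,\omega}$-a.e.\ $\eta$: both are maximal Gibbs measures inside $S_1$ with the same frozen exterior $\eta_{S_1^c}$, and following \cite{FP} one shows the regular conditional of the $+$-limit $\mu_{S_2}^{+,\omega}$ is itself realised as a monotone $+$-limit, so the two coincide.

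Finally, $\mu^+\in\mathcal{G}(\Gamma^+)$. For finite $S$ this is precisely $\mu^+\in\mathcal{G}(\gamma^D)$ from Proposition \ref{DyFrSp}. For infinite $S$ I must identify $\nu_\omega:=\mu^+[\,\cdot\mid\mathcal{F}_{S^c}](\omega)$ with $\mu_S^{+,\omega}\otimes\delta_{\omega_{S^c}}$. One inequality is soft: using the finite-$\Delta$ DLR equations and the tower property, $\nu_\omega$ is $\mu^+$-a.s.\ a Gibbs measure inside $S$ with exterior $\omega_{S^c}$, hence $\nu_\omega\le_{\rm st}\mu_S^{+,\omega}$ by maximality of the constrained measure. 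The reverse domination $\nu_\omega\ge_{\rm st}\mu_S^{+,\omega}$ expresses that the ``$+$ at infinity'' defining $\mu^+=\lim_\Lambda\gamma_\Lambda^D(\cdot\mid+)$ survives the conditioning on the fixed, finite-energy exterior $\omega_{S^c}$, and is obtained by sandwiching $\gamma_\Lambda^D(\cdot\mid+)$ with FKG and letting $\Lambda\uparrow$. This interchange of monotone limits --- equating the limit that fills $S$ with $+$ from the inside with the conditioning of the globally-$+$ measure on the outside of $S$ --- is the heart of the matter and the step I expect to be most delicate; it is exactly where FKG together with right-continuity ($\alpha>1$) are indispensable, and where \cite{ELN} supplies the Dyson-specific verification. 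The statement for $\mu^-$ and $\Gamma^-$ follows by the global spin-flip symmetry.
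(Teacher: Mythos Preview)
The paper does not supply its own proof of this theorem: it is stated as a cited result from \cite{FP,ELN} and then used as a tool, so there is no in-paper argument to compare against. Your sketch is precisely the Fern\'andez--Pfister construction that the citation points to --- verify FKG and right-continuity of $\gamma^D$, build $\mu_S^{+,\omega}$ as a stochastically monotone limit, check properness and consistency of $\Gamma^+$, and identify $\mu^+[\,\cdot\mid\mathcal{F}_{S^c}]$ with the constrained measure via a two-sided stochastic sandwich --- and the outline is correct.

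One small remark: you assert full quasilocality from $\alpha>1$, which is true and indeed stronger than what \cite{FP} requires; the paper's footnotes around the theorem emphasise only attractivity and \emph{right}-continuity, because that is all the general construction needs. Also, in the infinite--infinite consistency case and in the reverse domination $\nu_\omega\ge_{\rm st}\mu_S^{+,\omega}$, you correctly flag the interchange of monotone limits as the delicate step but leave it at the level of ``following \cite{FP}'' / ``\cite{ELN} supplies the Dyson-specific verification''; since the paper itself defers to those references, this is consistent with the paper's treatment, but a fully self-contained write-up would need to spell out that exchange explicitly.
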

These constructions allow us to consider, for given pasts, the expression of the  $g$-functions  as the magnetizations of Dyson models under various conditionings, and studying continuity will reduce to studying possible phase transition under constraints combined with  the study of  the stability of  interfaces.

Starting from $\mu^+$, we introduce $g^+$ to be the candidate to be the $g$-function representing (a version of) the single-site conditional probabilities  (\ref{g}) as a function of the past. Just as in  \cite{FP,ELN}, we introduce thus for any ``past'' configuration $\omega \in \Omega$:
$$
g^+(\omega):=\mu^+\left[\omega_0 | \mathcal{F}_{\{<0\}}](\omega)\right.
$$
Using the expression of Theorem \ref{thmGlob} in terms of global specifications and constrained measures with $S= \mathbb{Z}_+ = \{0,1,2,3,\ldots\}$, one gets, $\mu^+$-a.s. ($\omega$):

\begin{equation}\label{gplus}
g^+(\omega) =\Gamma_S^+[ \omega_0 | \omega] = \mu_S^{+,\omega} \otimes \delta_{\omega_{S^c}}[\omega_0]
\end{equation}
where $\mu_S^{+,\omega}$ is the constrained measure on $(\Omega_S,\mathcal{F}_S)$ as the (well-defined) weak limit
\be \label{constrLimit0}
\mu_S^{+,\omega}(d \sigma_S):=\lim_{\Delta \uparrow S} \gamma^D_\Delta (d \sigma\mid +_S \omega_{S^c}).
\ee

Previous works  and specific properties\footnote{Attractivity and right-continuity, see previous footnote and  also \cite{FP, ELN}.} insure $\mu^+$ is then indeed ``specified'' by $g^+$, in the sense that it is invariant by its left  action: $\mu^+ g^+=\mu^+$.

{\bf Note:} 
A non-continuous (= non-regular) $g$-function gives rise to a measure which is NOT a $g$-measure. To be a `proper'' $g$-function of the past, we would need that in addition to consistency, the function $g^+$ is {\em regular}, i.e.  essentially continuous (for which all  possible discontinuity points can be removed by modifications on  negligible sets).

Similarly to  e.g. \cite{ELN}, where two of us exhibited  (two-sided) discontinuity points by considering an alternating configuration $\omega_{{\rm alt}}$, we will  prove that for $L$ large and $N$ large compared to $L$, the  putative $g$-function $g^+$  can take significantly different values on  sub-neighborhoods $\mathcal{N}_{N,L}^{\pm,  left} \subset \mathcal{N}_{ L}(\omega_{{\rm alt}})$. Thanks to monotonicity-preservation, the constrained measure is explicitly built as the weak limit  (\ref{constrLimit}) obtained by $+$-boundary conditions fixed after a freezing 
$\omega$  in the past: For all $\omega \in \mathcal{N}_{L}(\omega_{{\rm alt}})$,
\be \label{constrLimit}
\mu^{+,\omega}_{\mathbb{Z}_+} (\cdot) =\lim_{I \in \mathcal{S},I \uparrow \mathbb{Z}_+} \gamma^D_I \left(\cdot\mid +_{(\mathbb{Z}_+)} \omega_{(\mathbb{Z}_+)^c}\right).
\ee

It is enough to consider this limit  along intervals  $I_n=[-n,+n] \cap \mathbb{Z}$ in the original space.

To disprove the $g$-measure property for the plus phase $\mu^+$ of our Dyson model, we will need to prove that a particular, in our case alternating, configuration $\omega_{{\rm alt}}$ is a non-removable point of discontinuity.  To do so, one has to find within its neighborhood two sub-neighborhoods (or at least two subsets of configurations of positive $\mu^+$ measure), on which  the value of $g^+$ drastically changes when modified arbitrarily far away. We  consider first finite-volume approximations of the constrained measure $\mu^{+,\omega}_{\mathbb{Z}^+}$ built as the weak limit (\ref{constrLimit}) with $+$-boundary condition by taking intervals $I_n$ arbitrarily large, larger than any other finite volumes encountered in this paper.

Consider the sub-neighborhoods $\mathcal{N}_{N,L}^{{ \pm},left}(\omega_{{\rm alt}})$ for $L<N<n$, whose size will be adjusted later. All together, this leads us to consider a partially frozen  Dyson model, either frozen into $+$ outside $I_n$, either into some arbitrary $\omega$ in $[-n,-N]$, or  into $-$  in the ''annulus'' $[-N-L, -L]$ and the alternating one $\omega_{{\rm alt}}$ in $[-L,-1]$.

\begin{center}
 \begin{tikzpicture}[]
    
\draw (-6,0) -- (6,0);

\node [below] at (-1.5,-0.2) {$-L$};

\node [below] at (-4.2,-0.2) {$-N-L$};

\node [below] at (5.5,-0.2) {$n$};

\node [below] at (-5.5,-0.2) {$-n$};

%------------- Between 0 and L
\foreach \n in {-3,...,3}{%
        \draw[fill] (\n,0) circle (1pt); 
    }
    
\foreach \n in {-6,...,6}{%
        \draw[fill] (\n/2,0) circle (1pt);    
    }
%--------------

%-------------- Between -N and 0
%\foreach \n in {3,...,4.5}{%
%        \draw[fill] (-\n/2,0) circle (1pt) node [above] {$\omega$};        
        
\foreach \n in {6,...,8}{%
        \draw[fill] (-\n/2,0) circle (1pt) node [above] {$-$};    
    }
    
    \foreach \n in {9,...,10}{%
        \draw[fill] (-\n/2,0) circle (1pt) node [above] {$\omega$};    
    }
%--------------

%-------------- Left -N
%\foreach \n in {1}{%
 %       \draw[fill] (-\n,0) circle (1pt) node [above] {$+$}; 
  %  }
    \foreach \n in {6}{%
        \draw[fill] (-\n,0) circle (1pt) node [above] {$+$}; 
    }
\foreach \n in {11}{%
        \draw[fill] (-\n/2,0) circle (1pt) node [above] {$+$};    
    }
    
       \foreach \n in {6}{%
        \draw[fill] (\n,0) circle (1pt) node [above] {$+$}; 
    }
\foreach \n in {11}{%
        \draw[fill] (\n/2,0) circle (1pt) node [above] {$+$};    
    }
    
       \foreach \n in {6}{%
        \draw[fill] (\n,-2) circle (1pt) node [above] {$+$}; 
    }
\foreach \n in {11}{%
        \draw[fill] (\n/2,-2) circle (1pt) node [above] {$+$};    
    }
%--------------

%-------------- Right L
\foreach \n in {1,2,...,3}{%
        \draw[fill] (\n+3,0) circle (1pt) node [above] {}; 
    }
    
\foreach \n in {1,2,...,6}{%
        \draw[fill] (\n/2+3,0) circle (1pt) node [above] {};    
    }
--------------

\node [below] at (0.5,-0.2) {$0$};

\foreach \n in {0,1,2,2.5}{%
        \draw[fill] (-\n,0) circle (1pt) node [above] {$-$}; 
    }
    
\foreach \n in {1,3}{%
        \draw[fill] (-\n/2,0) circle (1pt) node [above] {$+$};    
    }
    
%%--------------------------------------
%%--------------------------------------

\draw (-6,-2) -- (6,-2);
\node [below] at (0.5,-2.2) {$0$};
\node [below] at (-1.5,-2.2) {$-L$};

\node [below] at (-4.2,-2.2) {$-N-L$};

\node [below] at (5.5,-2.2) {$n$};
\node [below] at (-5.5,-2.2) {$-n$};

%------------- Between 0 and L
\foreach \n in {-3,...,3}{%
        \draw[fill] (\n,-2) circle (1pt); 
    }
    
\foreach \n in {-6,...,6}{%
        \draw[fill] (\n/2,-2) circle (1pt);    
    }
%--------------

%-------------- Between -N and 0
    
\foreach \n in {4,...,8}{%
        \draw[fill] (-\n/2,-2) circle (1pt) node [above] {$+$};    
    }
    \foreach \n in {9,...,10}{%
        \draw[fill] (-\n/2,-2) circle (1pt) node [above] {$\omega$};    
    }
%--------------

%-------------- Left -N
\foreach \n in {6}{%
        \draw[fill] (-\n,-2) circle (1pt) node [above] {$+$}; 
    }
    
    \foreach \n in {6}{%
        \draw[fill] (-\n,-2) circle (1pt) node [above] {$+$}; 
    }
    
\foreach \n in {11}{%
        \draw[fill] (-\n/2,-2) circle (1pt) node [above] {$+$};    
    }
%--------------

%-------------- Right L
\foreach \n in {1,2,...,3}{%
        \draw[fill] (\n+3,-2) circle (1pt) node [above] {}; 
    }
    
\foreach \n in {1,2,...,6}{%
        \draw[fill] (\n/2+3,-2) circle (1pt) node [above] {};    
    }
%--------------

%-------------- Node L_1

%\node [below] at (0,-0.2) {$I^*$};
%\node  at (1,0) {$($};
%\node  at (2,0) {$)$}; 

%\node [above] at (0.25,-2) {$\downarrow$}; 
 %\node[align=center, above] at (0.25,-1.5) {$+$ phase};

%\node [above] at (2.5,-2) {$\downarrow$}; 
%\node[align=center, above] at (2.5,-1.5) {$+$ phase};
 
%\node [below] at (-0.5,-2.2) {$L_1$};

\foreach \n in {0,1,2}{%
        \draw[fill] (-\n,-2) circle (1pt) node [above] {$-$}; 
    }
    
\foreach \n in {1,3,5}{%
        \draw[fill] (-\n/2,-2) circle (1pt) node [above] {$+$};    
    }
 
 \node[align=center, below] at (0,-3)%
{Figure 1 : Left $\pm$ Neighborhoods of $\omega_{\rm alt}$};
 \end{tikzpicture}
\end{center}

By (\ref{gplus}),(\ref{constrLimit0}) and (\ref{constrLimit}), for a $\mu^+$-a.s. given $\omega$, the value taken by  $g^+$ will be the infinite-volume limit of  the magnetization of the finite-volume Gibbs measure of a Dyson-model on $[0,n]$, with the same decay $\alpha <2$ and $\omega$-dependent inhomogeneous external fields $h_x[\omega], x \geq 0$. In this minus case, for configurations $\omega : =\omega^-$ on the {sub-}neighborhood $\mathcal{N}_{N,L}^{-,left}(\omega_{{\rm alt}})$, one gets external fields 
$$
\forall x \geq 0,\; h_x[\omega]=\sum_{k=1}^L \frac{(-1)^k}{(k+x)^\alpha} - \sum_{k=L+1}^N \frac{1}{(k+x)^\alpha} + \sum_{k=N}^n \frac{\omega_{-k}}{(k+x)^\alpha} + 2 \sum_{k \geq n+1} \frac{1}{(k+x)^\alpha}
$$
while for $\omega:=\omega^+ \in \mathcal{N}_{N,L}^{+,left}(\omega_{{\rm alt}})$, we get :
$$
\forall x \geq 0,\; h_x[\omega]=\sum_{k=1}^L \frac{(-1)^k}{(k+x)^\alpha} + \sum_{k=L+1}^N \frac{1}{(k+x)^\alpha} + \sum_{k=N}^n \frac{\omega_{-k}}{(k+x)^\alpha} + 2 \sum_{k \geq n+1} \frac{1}{(k+x)^\alpha}
$$

We are reduced to study the magnetization under a  generalisation of the long-range RFIM (Random Field Ising Model), now  with a possibly dependent and/or biased, disordered external field, whose distribution is linked to the original measure $\mu$ itself via the distribution of the past. In such situations, when the fields are homogeneous one can sometimes use correlation inequalities and uniqueness via Lee-Yang \cite{LY} type arguments -- as were e.g. used to prove essential discontinuities for the decimation of Dyson model in \cite{ELN} -- but here our main difficulty is that this external field will change signs, depending of the value of $x \;  \in [0,1]$. For $n, L, N(L)$ large enough, it starts to be negative at  $0$ (due to its left-neighborhood frozen into minus in our alternating configuration) and, due to the $+$-boundary procedure far away, it becomes positive for $x$ large.

\begin{center}
 \begin{tikzpicture}[]
    
\draw (-6,0) -- (6,0);

\node [below] at (-2.5,-0.2) {$-L$};

\node [below] at (3.5,-0.2) {$n$};

\node [below] at (-5,-0.2) {$-n$};

%------------- Between 0 and L
\foreach \n in {-3,...,3}{%
        \draw[fill] (\n,0) circle (1pt); 
    }
    
\foreach \n in {-6,...,6}{%
        \draw[fill] (\n/2,0) circle (1pt);    
    }
%--------------

%-------------- Between -N and 0
    
\foreach \n in {6,...,10}{%
        \draw[fill] (-\n/2,0) circle (1pt) node [above] {$-$};    
    }
%--------------

%-------------- Left -N
\foreach \n in {5}{%
        \draw[fill] (-\n,0) circle (1pt) node [above] {$+$}; 
    }
    
\foreach \n in {11}{%
        \draw[fill] (-\n/2,0) circle (1pt) node [above] {$+$};    
    }
%--------------

%-------------- Right L
\foreach \n in {1,2,...,3}{%
        \draw[fill] (\n+3,0) circle (1pt) node [above] {$+$}; 
    }
    
\foreach \n in {1,2,...,6}{%
        \draw[fill] (\n/2+3,0) circle (1pt) node [above] {$+$};    
    }
%--------------

%-------------- Node L_1

%\node [below] at (0,-0.2) {$I^*$};
%\node  at (1,0) {$($};
%\node  at (2,0) {$)$}; 

\node [above] at (0.25,0) {$\downarrow$}; 
 \node[align=center, above] at (0.25,0.5) { $h_x(\omega)<0$};

\node [above] at (2.5,0) {$\downarrow$}; 
\node[align=center, above] at (2.5,0.5) {$h_x(\omega)>0$};
 
\node [below] at (-0.5,-0.2) {$0$};

\foreach \n in {1,2}{%
        \draw[fill] (-\n,0) circle (1pt) node [above] {$-$}; 
    }
    
\foreach \n in {3,5}{%
        \draw[fill] (-\n/2,0) circle (1pt) node [above] {$+$};    
    }
    
%%--------------------------------------
%%--------------------------------------

\draw (-6,-2) -- (6,-2);

\node [below] at (-2.5,-2.2) {$-L$};

\node [below] at (3.5,-2.2) {$n$};

\node [below] at (-5,-2.2) {$-n$};

%------------- Between 0 and L
\foreach \n in {-3,...,3}{%
        \draw[fill] (\n,-2) circle (1pt); 
    }
    
\foreach \n in {-6,...,6}{%
        \draw[fill] (\n/2,-2) circle (1pt);    
    }
%--------------

%-------------- Between -N and 0
    
\foreach \n in {6,...,10}{%
        \draw[fill] (-\n/2,-2) circle (1pt) node [above] {$+$};    
    }
%--------------

%-------------- Left -N
\foreach \n in {5}{%
        \draw[fill] (-\n,-2) circle (1pt) node [above] {$+$}; 
    }
    
\foreach \n in {11}{%
        \draw[fill] (-\n/2,-2) circle (1pt) node [above] {$+$};    
    }
%--------------

%-------------- Right L
\foreach \n in {1,2,...,3}{%
        \draw[fill] (\n+3,-2) circle (1pt) node [above] {$+$}; 
    }
    
\foreach \n in {1,2,...,6}{%
        \draw[fill] (\n/2+3,-2) circle (1pt) node [above] {$+$};    
    }
%--------------

%-------------- Node L_1

%\node [below] at (0,-0.2) {$I^*$};
%\node  at (1,0) {$($};
%\node  at (2,0) {$)$}; 

\node [above] at (0.25,-2) {$\downarrow$}; 
 \node[align=center, above] at (0.25,-1.5) {$h_x(\omega)>0$};

\node [above] at (2.5,-2) {$\downarrow$}; 
\node[align=center, above] at (2.5,-1.5) {$h_x(\omega)>0$};
 
\node [below] at (-0.5,-2.2) {$0$};

\foreach \n in {1,2}{%
        \draw[fill] (-\n,-2) circle (1pt) node [above] {$-$}; 
    }
    
\foreach \n in {3,5}{%
        \draw[fill] (-\n/2,-2) circle (1pt) node [above] {$+$};    
    }
 
 \node[align=center, below] at (0,-3)%
{Figure 2: Inhomogeneous $\omega$-dependent external fields};
 \end{tikzpicture}
\end{center}

Nevertheless, on the neighborhood $\mathcal{N}_{N,L}^{-,left}$, the inhomogeneous magnetic field $h_x(\omega)$ will stay negative  far enough to the past so that a $-$-phase is still felt at the origin in the  limits, while on the   neighborhood $\mathcal{N}_{N,L}^{+,left}$, a $+$-phase is always selected for $N$ and $L$ of adjusted size.
In the former case, we need to evaluate the effect of large, possibly huge, interval of minuses on its outside, faraway through an intermediate neutral interval, reminiscent of the phenomenon of {\em entropic repulsion in wetting phenomena} (see e.g. \cite{PV}, or \cite{Gia} for similar terminology in the setting of random polymers). To prove the essential discontinuity and in some sense "some" wetting beyond the origin through the alternating region, we first use the interface result of \cite{CMPR} (see also \cite{CMP17}) to state and prove in Section 3 a 
%more standard 
wetting result that we relate to entropic repulsion.

%  shortly described now.\\

%To express {\bf that} this wetting {\bf does occur} when the single-point interface is mesoscopic, we shall consider $\mu^{+}$, conditioned on the event $-_{N}$ of there being an interval $[-N,0]$ of minus spins. We claim that {\bf with such a constraint} there are two "completely wet regions" of lengths $L$, namely $\Delta_{N,L}=[-L-N,-N]$, and $\Delta_{0,L}=[0,L]$, {\bf forming an annulus } left and right of the fixed interval on which the minus  phase $\mu^-$ is felt. In other words, we prove in Section 3, Proposition \ref{prop}, that in our range of decays and at low temperature, for $N$ ({\bf and $L$ ?}) large enough, these two intervals {\bf are such that  both for all $i \in \Delta$, $\mu^+ (\omega_\Delta| -_{N}) < - \frac{1}{2} m$} {\bf (Q : what was the meaning of $\omega_\Delta$ ?)}. To conclude from this that the constrained magnetization in $\mathcal{N}^-_{N,L}$ is also strictly negative (see next section), we need first to evaluate the cost of inserting  a "bubble" of $\omega_{{\rm alt}}$ on the right of the $-$'s one, sufficiently close to the origin to feel the minuns phase,  and  to take $N=N(L)=\mathcal{O}\big(L^{1/(\alpha - 1)}\big)$ large enough to neglect boundary effects, by the equivalence of boundary conditions of \cite{BLP}. 

\subsection{Interfaces in Dyson models}

We will  thus derive our entropic repulsion argument from the interface result of  \cite{CMPR}. We start by describing and summarizing the latter and in particular briefly recall the contour construction based on triangles, that was first described in \cite{CFMP} to formalize the contour argument of \cite{frsP}.  Then we describe the Peierls estimate they obtain in this one-dimensional long-range context. In addition, this triangle construction also allows an unambiguous notion of interface in the phase transition region, as we describe now.

Let $L\ge 1$, and consider  $\Lambda=\Lambda_L=[-L,L]$. Define the dual lattice $\Lambda^*=\Lambda+\frac{1}{2}$ as the set  $\Lambda$ shifted by $1/2$. Given a configuration $\omega\in \{-1,+1\}^{\Lambda}$, let us define  configurations of triangles. A {\em spin-flip point} is a site $i$ in $\Lambda^*$ such that $\omega_{i-\frac{1}{2}}\neq \omega_{i+\frac{1}{2}}$. For each spin-flip point $i$, let us consider the interval $\left[i-\frac{1}{100},i+\frac{1}{100}\right]\subset \mathbb{R}$ and choose a real number $r_i$ in this interval such that, for every four distinct  points $r_{i_1}, r_{i_2}, r_{i_3}, r_{i_4}$ with $|r_{i_1}-r_{i_2}|\neq |r_{i_3}-r_{i_4}|$. The $r_i$ 's will be the bases of the triangles, and the last condition  is asked to avoid ambiguity in the construction of the triangle.

For each spin-flip point $i$, we start growing a ``$\lor$-line'' at $r_i$ where this $\lor$-line is embedded in $\mathbb{R}^2$ with angles $\pi/4$ and $3\pi/4$. If at some time two $\lor$-lines starting from different spin-flip points touch, the other two lines starting from those two spin-flip points stop growing,  and are removed without forming a triangle. Then we repeat this procedure. This process  can also be seen in the following way: for each $r_i$, draw a straight vertical line passing through it. Take the smallest distance between these lines, call the correponding $r_i$ and $r_j$ the spin-flip points of these lines, and draw a isosceles triangle with base angle $\pi/4$. Then, remove the lines associated to $r_i$ and $r_j$. Re-start.

Note that, for homogeneous boundary conditions, since the number of spin-flip points is even, every $r_i$ is a vertex of some triangle. On another hand, if we consider the Dobrushin boundary condition, then the number of spin-flip points is odd, and so there exists a {\em unique} spin-flip point which is not the vertex of any triangle. This point is called the ``interface point''. 

The first notion of interface point in this long-range one-dimensional context appeared in \cite{Joh2} in the terms of a ``thick interface'', and afterwards \cite{CMPR} defined the interface point according to the construction above.

Let
$$
T_L=\left\{ -1-\frac{1}{2L},-1+\frac{1}{2L},\ldots,-\frac{1}{2L},\frac{1}{2L},\ldots,1+\frac{1}{2L} \right\},
$$
and consider the Dobrushin boundary condition with all spins to the left of $\Lambda$ fixed to be minus and all spins to the right of $\Lambda$ fixed to be plus. Given a configuration $\omega$ in $\Lambda$, let  $I^*\equiv I^*(\omega)\in \Lambda^*$ be the interface point of the configuration $\omega$, and given $\theta\in T_L$, denote by
$$
\mathcal{S}_{\Lambda,\theta}=\{\omega: I^*=\theta L\}
$$
the set of spin configurations in $\Lambda$ for which the interface point is situated in $\theta L$. Note that this forms a  partition of $\Omega$ (if $\theta\neq \theta'$, then $\mathcal{S}_{\Lambda,\theta}\cap \mathcal{S}_{\Lambda,\theta'}=\emptyset$). We use it to define for each $\theta \in T_L$ the probability  to have an interface in $\theta L$ by
$$
\mu^{-+}_{\Lambda}[I^*=\theta L]=\frac{Z^{-+}_{\theta,\Lambda}}{Z^{-+}_{\Lambda}},
$$
where the partitions functions $
Z^{-+}_{\theta,\Lambda}=\sum_{\omega \in \mathcal{S}_{\Lambda,\theta}}e^{-\beta H_\Lambda^{-+}(\omega)}\; {\rm and} \;Z^{-+}_{\Lambda}=\sum_{\theta\in T_L}Z^{-+}_{\theta,\Lambda}
$ are defined via the Hamiltonian $H_\Lambda^{-+}$ in volume $\Lambda$ with Dobrushin boundary conditions.

For $i\in\Lambda$, the conditional expectation of $\omega_i$, given $I^*=\theta L$, is
$$
\mu^{-+}_{\theta,\Lambda}[\omega_i]:=\mu^{-+}_{\Lambda}[\omega_i|I^*=\theta L]=\frac{1}{Z^{-+}_{\theta,\Lambda}}\sum_{\omega \in \mathcal{S}_{\Lambda,\theta}}\omega_i e^{-\beta H_\Lambda^{-+}(\omega)}.
$$
Moreover, the expectation of $\omega_i$ in terms of in terms of  $\mu^{-+}_{\theta,\Lambda}[\omega_i]$ is
\begin{equation}\label{chain}
\mu^{-+}_{\Lambda_L}[\omega_i]=\sum_{\theta\in T_L}\mu^{-+}_{\theta,\Lambda_L}[\omega_i]\mu^{-+}_{\Lambda_L}(I^*=\theta L).
\end{equation}

These  constructions of triangles and associated contours are used in \cite{CMPR} to  get cluster expansions of partition functions that yield first the following proposition, which will be an essential tool for us. Let $Z^{-}_{\Lambda}$ be the partition function on $\Lambda$ with minus boundary condition, and let $\zeta(\alpha)=\sum_{k=1}^{\infty}\frac{1}{k^{\alpha}}$ be the Riemann zeta function. 

\begin{proposition}[Cassandro, Merola, Picco, Rozikov -- 2014]\label{prop:CMPR} 
For all $\alpha\in (\alpha_+,2)$, there exists $\beta_0\equiv \beta_0(\alpha) { >0}$ such that for all $\beta>\beta_0$ and $\theta \in T_L$, the following occurs:
$$
\begin{aligned}
&\log Z^{-+}_{\theta,\Lambda} - \log Z^{-}_{\Lambda}\\
&= -c_L(\alpha)L^{2-\alpha}+e^{-2\beta(\zeta(\alpha)+J)}\frac{L^{2-\alpha}}{(2-\alpha)(\alpha-1)}f_{\alpha}(\theta)(1\pm e^{-c_1(\alpha)\beta})(1+o(L)),
\end{aligned}
$$
where $f_{\alpha}(\theta)=(1+\theta)^{2-\alpha}+(1-\theta)^{2-\alpha}$, $c_L$ and $c_1$ are two positive constants depending on $\alpha$, once we require that the nearest-neighbor interaction $J=J(1)\gg 1$.
\end{proposition}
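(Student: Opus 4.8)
The plan is to follow the contour-and-cluster-expansion strategy of Cassandro et al.\ \cite{CFMP, frsP, CMPR}: encode each spin configuration by its triangle contours together with, in the Dobrushin case, the single interface point, factor out the energy of the straight interface sitting at $\theta L$, and treat the remaining fluctuations by a convergent cluster expansion. Writing $Z^{-+}_{\theta,\Lambda}=e^{-\beta E_0(\theta)}\,\Xi^{-+}_{\theta,\Lambda}$ and $Z^{-}_{\Lambda}=e^{-\beta E_0^-}\,\Xi^{-}_{\Lambda}$, where $E_0(\theta)$ is the energy of the configuration equal to $-$ on $[-L,\theta L)$ and to $+$ on $[\theta L,L]$ and $\Xi$ is the associated contour (fluctuation) partition function, one gets the splitting
\begin{equation}
\log Z^{-+}_{\theta,\Lambda}-\log Z^{-}_{\Lambda}
= -\beta\,\Delta E(\theta)+\big(\log\Xi^{-+}_{\theta,\Lambda}-\log\Xi^{-}_{\Lambda}\big),
\qquad \Delta E(\theta)=E_0(\theta)-E_0^-.
\end{equation}
I would show that the first term gives the $\theta$-independent leading contribution $-c_L(\alpha)L^{2-\alpha}$, and that the whole $\theta$-dependence, together with its small prefactor $e^{-2\beta(\zeta(\alpha)+J)}$, is produced by the cluster-expansion term; in particular the interface localisation is a purely \emph{entropic} effect.

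The energy $\Delta E(\theta)$ equals twice the sum of $|x-y|^{-\alpha}$ over the pairs straddling $\theta L$ that actually appear in $H_\Lambda$, namely $x$ in the plus half-line $[\theta L,+\infty)$ against $y$ in the minus half-line $(-\infty,\theta L)$, minus the (divergent) boundary-to-boundary part common to both ensembles. Setting $T(d)=\sum_{k>d}k^{-\alpha}\sim d^{1-\alpha}/(\alpha-1)$, the surviving pieces are the plus-box against the whole left half-line and the minus-box against the right boundary, which reindex to $\sum_{d=0}^{\lfloor L(1-\theta)\rfloor}T(d)$ and $\sum_{d=\lfloor L(1-\theta)\rfloor+1}^{2L}T(d)$. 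These ranges \emph{concatenate}, so $\Delta E(\theta)/2=\sum_{d=0}^{2L}T(d)\sim 2^{2-\alpha}L^{2-\alpha}/[(\alpha-1)(2-\alpha)]$, independently of $\theta$. This is exactly the mechanism flattening the straight-interface energy, yielding $-c_L(\alpha)L^{2-\alpha}$, and it forces the localisation to be entropic rather than energetic.

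The $\theta$-dependence is then read off from $\log\Xi^{-+}_{\theta,\Lambda}-\log\Xi^{-}_{\Lambda}$ via the polymer expansion built on the triangle contours. By the $\pm$-symmetry of the interaction the bulk cluster densities of the minus region $[-L,\theta L)$ and of the plus region $[\theta L,L]$ coincide with those of $\Xi^{-}_{\Lambda}$ and cancel; what remains is a long-range surface term carried by the smallest excitations, whose base Boltzmann weight is $e^{-2\beta(\zeta(\alpha)+J)}$ once the nearest-neighbour coupling is set to $J=J(1)\gg1$. Such an excitation in one phase feels, through the tail kernel, the field of the opposite phase across the interface; summing its $\theta$-dependent weight over the plus region against $T(\cdot)$ produces $\sum_{d=0}^{L(1-\theta)}T(d)\sim L^{2-\alpha}(1-\theta)^{2-\alpha}/[(\alpha-1)(2-\alpha)]$, and over the minus region the companion term $\sim L^{2-\alpha}(1+\theta)^{2-\alpha}/[(\alpha-1)(2-\alpha)]$, whose sum is precisely $f_\alpha(\theta)L^{2-\alpha}/[(\alpha-1)(2-\alpha)]$. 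Higher clusters and the nearest-neighbour corrections contribute the relative errors $(1\pm e^{-c_1(\alpha)\beta})$, and the passage from sums to integrals, together with the junction and endpoint terms, gives the factor $(1+o(L))$.

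The hard part is the convergence of this expansion, \emph{uniformly in $\theta$ and $L$}, in the genuinely long-range regime: the triangle contours interact at all scales and their weights decay only polynomially, so the standard polymer/Peierls convergence criterion is not available off the shelf. This is where the two hypotheses enter. The lower bound $\alpha>\alpha_+=3-\log3/\log2$ is precisely what makes the entropy of the triangle configurations lose against their energy, so that the Peierls sum over contours converges---the technical core of \cite{CFMP,CMPR}---while taking $J=J(1)\gg1$ suppresses the proliferation of small contours and supplies the smallness parameter $e^{-c_1(\alpha)\beta}$ controlling the remainder. Granting the contour estimates of \cite{CMPR}, the expansion converges and the two terms are isolated as above; the rest is bookkeeping of polynomial sums, affecting only the constants $c_L(\alpha),c_1(\alpha)$ and the error terms already displayed.
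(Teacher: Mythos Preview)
The paper does not prove this proposition; it is quoted from \cite{CMPR} as an external input (``These constructions of triangles and associated contours are used in \cite{CMPR} to get cluster expansions of partition functions that yield first the following proposition\ldots''), so there is no in-paper argument to compare against. Your sketch is a faithful outline of the strategy of \cite{CMPR}: the split into a ground-state energy and a contour (fluctuation) partition function, the cluster expansion of the latter built on the triangle contours of \cite{CFMP}, and the role of the hypotheses $\alpha>\alpha_+$ and $J\gg1$ in securing convergence.

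Your concatenation argument showing that the straight-interface cost $\Delta E(\theta)=2\sum_{d=0}^{2L}T(d)$ is $\theta$-independent is correct and is exactly the mechanism that forces the localisation to be entropic rather than energetic; this is the observation underlying \cite{CMPR}. The identification of $f_\alpha(\theta)$ with the leading single-flip cluster contribution, summed over each phase against the tail field of the other, is also the right heuristic for where the $\theta$-dependence enters.

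What your outline does not supply---and what constitutes the bulk of \cite{CMPR}---is the actual control of the constrained cluster expansion: one must show that, \emph{conditionally on the interface sitting at $\theta L$}, the polymer expansion for $\Xi^{-+}_{\theta,\Lambda}$ converges uniformly in $\theta$ and $L$, and that all clusters beyond the single-flip term contribute only the multiplicative error $(1\pm e^{-c_1(\alpha)\beta})$. This requires the hierarchical-contour estimates and the careful combinatorics of \cite{CFMP,CMPR}; your final sentence ``granting the contour estimates of \cite{CMPR}, the expansion converges'' is precisely where all the work lies. As a proof \emph{plan} the proposal is sound and matches the cited source, but without reproducing those estimates it remains a sketch rather than a proof---which, to be fair, is also all the present paper offers.
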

The restriction of $\alpha>\alpha_+$ appears since in \cite{CFMP} the proof of the phase transition of the Dyson model by a contour argument needs it\footnote{Although for the existence of a transition the validity can be extended to the whole range of phase-transition decays by FKG arguments. This does not work for  inhomogeneous situations such as disordered systems \cite{COP} or interface fluctuations \cite{CMPR}.}, while the contours introduced are based on the triangles  defined above.

 From this Proposition \ref{prop:CMPR}, one deduces in Corollary \ref{coro:CMPR} below that the interface point is located in the middle of the interval of $\Lambda$, up to a Gaussian correction which grows sublinearly in $L$. This means that the correction describes  mesoscopic fluctuations. In particular, this implies that macroscopic fluctuations are extremely improbable.

\begin{corollary}\label{coro:CMPR}
For every $\alpha\in (\alpha_+,2)$, there exists $\beta_1(\alpha)>\beta_0(\alpha)$ satisfying the following: for every $\beta>\beta_1(\alpha)$, there exists $\varepsilon\equiv \varepsilon(\beta)>0$ and $L(\varepsilon)\ge 1$ such that, for every $L>L(\varepsilon)$,
$$
\mu^{-+}_{\Lambda_L}\left[\left| I^* \right|> \varepsilon  L \right]\le 3(1-\varepsilon) Le^{-C L^{2-\alpha}(1+o(L))},
$$
where $C\equiv C(\alpha,\beta,\varepsilon)$ is a positive constant. 
\end{corollary}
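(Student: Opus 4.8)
The plan is to read the corollary off directly from the asymptotics of Proposition \ref{prop:CMPR}, by treating the family of weights $\mu^{-+}_{\Lambda_L}[I^*=\theta L]=Z^{-+}_{\theta,\Lambda}/Z^{-+}_{\Lambda}$ as a finite Gibbs-type distribution on the label set $T_L$ whose effective ``energy'' is governed by $f_\alpha$. First I would divide numerator and denominator by $Z^-_\Lambda$, so that
\[
\mu^{-+}_{\Lambda_L}[I^*=\theta L]=\frac{Z^{-+}_{\theta,\Lambda}/Z^-_\Lambda}{\sum_{\theta'\in T_L}Z^{-+}_{\theta',\Lambda}/Z^-_\Lambda},
\]
and substitute the expansion of Proposition \ref{prop:CMPR}. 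The $\theta$-independent contribution $-c_L(\alpha)L^{2-\alpha}$ is common to every term and cancels in the ratio, leaving weights of the form
\[
\frac{Z^{-+}_{\theta,\Lambda}}{Z^-_\Lambda}=\exp\!\Big(-c_L(\alpha)L^{2-\alpha}+B\,L^{2-\alpha}f_\alpha(\theta)\,(1\pm e^{-c_1(\alpha)\beta})(1+o(L))\Big),\qquad B:=\frac{e^{-2\beta(\zeta(\alpha)+J)}}{(2-\alpha)(\alpha-1)}>0.
\]

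The geometric input is the shape of $f_\alpha$. Since $\alpha\in(\alpha_+,2)\subset(1,2)$ one has $0<2-\alpha<1$, so $x\mapsto x^{2-\alpha}$ is strictly concave on $(0,\infty)$; hence $f_\alpha(\theta)=(1+\theta)^{2-\alpha}+(1-\theta)^{2-\alpha}$ is even, strictly concave, attains its unique maximum $f_\alpha(0)=2$ at $\theta=0$, and is strictly decreasing in $|\theta|$. A second-order expansion gives $2-f_\alpha(\theta)=(2-\alpha)(\alpha-1)\theta^2+o(\theta^2)$, so for each $\varepsilon>0$ the gap $\delta_\varepsilon:=2-f_\alpha(\varepsilon)$ is strictly positive (of order $\varepsilon^2$). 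This strict maximum of the energy at the midpoint is exactly the mechanism pinning the interface to the centre.

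Next comes a union-bound/ratio estimate. For the event $\{|I^*|>\varepsilon L\}$ I would bound each weight with $|\theta|>\varepsilon$ from above, using $f_\alpha(\theta)\le f_\alpha(\varepsilon)$ together with the worst-case sign $(1+e^{-c_1(\alpha)\beta})$; for the denominator I would retain only the single label $\theta_0=\tfrac{1}{2L}\in T_L$ closest to the origin, bounded below with the sign $(1-e^{-c_1(\alpha)\beta})$. Dividing, the exponent of the resulting ratio is at most
\[
B\,L^{2-\alpha}(1+o(L))\Big[f_\alpha(\varepsilon)(1+e^{-c_1(\alpha)\beta})-f_\alpha(\theta_0)(1-e^{-c_1(\alpha)\beta})\Big],
\]
and since $f_\alpha(\theta_0)\to 2$ as $L\to\infty$ the bracket is bounded above by $-\delta_\varepsilon+c\,e^{-c_1(\alpha)\beta}$ with $c\le 4$. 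Counting the at most $3(1-\varepsilon)L$ labels $\theta\in T_L$ with $|\theta|>\varepsilon$ (spacing $1/L$ over a range of length $\approx 2$) then yields
\[
\mu^{-+}_{\Lambda_L}[|I^*|>\varepsilon L]\le 3(1-\varepsilon)L\,\exp\!\Big(-B\,L^{2-\alpha}(1+o(L))(\delta_\varepsilon-c\,e^{-c_1(\alpha)\beta})\Big),
\]
which is the claimed bound as soon as the bracket $\delta_\varepsilon-c\,e^{-c_1(\alpha)\beta}$ is positive, with $C\equiv C(\alpha,\beta,\varepsilon)=B(\delta_\varepsilon-c\,e^{-c_1(\alpha)\beta})$.

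The hard part will be to make the sign-definite gap $\delta_\varepsilon$ dominate the multiplicative corrections $(1\pm e^{-c_1(\alpha)\beta})$ \emph{uniformly} in $\theta$, so that $C>0$ genuinely; this is where the choice of $\beta_1(\alpha)>\beta_0(\alpha)$ and of $\varepsilon(\beta)$ enters. Concretely, I would fix $\varepsilon$ first and then take $\beta$ large enough that $c\,e^{-c_1(\alpha)\beta}<\tfrac12\delta_\varepsilon$, forcing the exponent below $-\tfrac12 B\delta_\varepsilon L^{2-\alpha}(1+o(L))$. The one genuinely technical point to verify carefully is that the $(1+o(L))$ errors and the possibly $\theta$-dependent signs in Proposition \ref{prop:CMPR} can be absorbed into a single $(1+o(L))$ factor in the final exponent without destroying its strict negativity; everything else is the elementary concavity of $f_\alpha$ and a union bound over the $O(L)$ values of $\theta$.
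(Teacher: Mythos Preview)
Your proposal is correct and follows essentially the same approach as the paper: both reduce the probability to a ratio of the $Z^{-+}_{\theta,\Lambda}$ via Proposition~\ref{prop:CMPR}, exploit the strict concavity of $f_\alpha$ to produce a positive gap between $f_\alpha$ near the centre and $f_\alpha(\theta)$ for $|\theta|>\varepsilon$, balance that gap against the $(1\pm e^{-c_1(\alpha)\beta})$ corrections, and finish with a union bound over the $O(L)$ labels in $T_L$. The only cosmetic differences are that the paper uses the fixed reference point $\theta=\tfrac12$ rather than your $\theta_0=\tfrac{1}{2L}$, and selects $\beta_1$ first (from the extremal inequality $f_\alpha(\pm 1)<f_\alpha(\tfrac12)$) and then $\varepsilon(\beta)$ by continuity, whereas you fix $\varepsilon$ and then take $\beta$ large---either ordering delivers the quantifier structure stated in the corollary.
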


\begin{proof}
Let $\alpha\in (\alpha_+,2)$  and $\theta \in T_L$. Differentiating $f_{\alpha}$ two times, we obtain
$$
\begin{aligned}
f'_{\alpha}(\theta)&=(2-\alpha)(1+\theta)^{1-\alpha}-(2-\alpha)(1-\theta)^{1-\alpha},\\
f''_{\alpha}(\theta)&=(2-\alpha)(1-\alpha)(1+\theta)^{-\alpha}+(2-\alpha)(1-\alpha)(1-\theta)^{-\alpha}.
\end{aligned}
$$
Thus, $f_{\alpha}$ on $[-1,1]$ only attains its maximum in $\theta=0$, and so $f_{\alpha}$ on $T_L$ attains its maximum in $\theta=\frac{1}{2}$ and $\theta=-\frac{1}{2}$.  Now, by Proposition \ref{prop:CMPR}, for every $\beta>\beta_0(\alpha)$,
\begin{equation}\label{eq:coro}
\begin{aligned}
&\log Z^{-+}_{\theta, \Lambda_L} - \log Z^{-+}_{\frac{1}{2},\Lambda_L} \\
&\le \frac{e^{-2\beta (\zeta(\alpha)+J)}}{(2-\alpha)(\alpha-1)} L^{2-\alpha}(1+o(L))\left[f_{\alpha}(\theta)(1+e^{-c_1(\alpha)\beta})-f_{\alpha}\left(\frac{1}{2}\right)(1-e^{-c_1(\alpha)\beta})\right].
\end{aligned}
\end{equation}
Since $f_{\alpha}(1)=f_{\alpha}(-1)<f_{\alpha}\left(\frac{1}{2}\right)$, there exists $\beta_1(\alpha)>\beta_0(\alpha)$ such that, 
$$
\forall \beta>\beta_1(\alpha),\; f_{\alpha}(1)=f_{\alpha}(-1)<f_{\alpha}\left(\frac{1}{2}\right)\left(\frac{1-e^{-c_1(\alpha)\beta}}{1+e^{-c_1(\alpha)\beta}}\right).
$$
Since $f_{\alpha}$ is continuous in $\theta\in [-1,1]$, there exists $\varepsilon=\varepsilon(\beta)>0$ and $L(\varepsilon)\ge 1$ such that, for every $L>L(\varepsilon)$ with $\{\theta \in T_L:|\theta|\le \varepsilon\}\neq \emptyset$, we have that, for $\theta\in T_L$ with $|\theta|>\varepsilon$,
\begin{equation}\label{eq:coro_f}
f_{\alpha}(\theta)<f_{\alpha}\left(\frac{1}{2}\right)\left(\frac{1-e^{-c_1(\alpha)\beta}}{1+e^{-c_1(\alpha)\beta}}\right).
\end{equation}
Let us define $M_{\alpha}=\max\{f_{\alpha}(\theta):|\theta|>\varepsilon\}$ and
\begin{equation}\label{eq:coro_g}
g(\alpha,\beta)=\frac{e^{-2\beta(\zeta(\alpha)+J)}}{(2-\alpha)(\alpha-1)} \left[f_{\alpha}\left(\frac{1}{2}\right)(1-e^{-c_1(\alpha)\beta})-M_{\alpha}(1+e^{-c_1(\alpha)\beta})\right].
\end{equation}
Thus, for every $\beta>\beta_1(\alpha)$ and $L>L(\varepsilon)$, from (\ref{eq:coro}), (\ref{eq:coro_f}) and (\ref{eq:coro_g}), we have
$$
\begin{aligned}
\mu^{-+}_{\Lambda_L}\left[\left| I^* \right|> \varepsilon L  
  \right]&= \frac{1}{Z^{-+}_{\Lambda}}\sum_{\substack{\theta\in T_L\\\left| \theta \right|> \varepsilon  }}   
   Z^{-+}_{\theta,\Lambda} \\
     & \le \frac{Z^{-+}_{\frac{1}{2},\Lambda}}{Z^{-+}_{\Lambda}}\sum_{\substack{\theta\in T_L\\\left| \theta \right|> \varepsilon  }}   
   e^{-g(\alpha,\beta)L^{2-\alpha}(1+o(L))}. \\
   &\le |\{\theta \in T_L: |\theta|>\varepsilon\}|e^{-g(\alpha,\beta)L^{2-\alpha}(1+o(L))}\\
   &\le 3(1-\varepsilon)L e^{-g(\alpha,\beta)L^{2-\alpha}(1+o(L))},
\end{aligned}
$$
as we desired.
\end{proof}

From Proposition \ref{prop:CMPR} and the observation that, at finite volume $\Lambda$, for any $x^* \in \Lambda$,
$$
\mu^{-+}_{\theta,\Lambda}[\omega_{x^*}]= \frac{d}{dg} (\log{Z_{\theta, \Lambda}^{g,x^*}}) \Big|_{g=0}
$$
where  for any $g \in \R$, $Z_{\theta, \Lambda}^{g,x^*}=\sum_{\sigma_\Lambda \in \mathcal{S}_{\Lambda, \theta}} e^{-\beta H_{\Lambda}^{-+}(\sigma_\Lambda) + g \sigma_{x^*}}$, Cassandro {\em et al.} also obtained in \cite{CMPR} the following estimate for  important conditional magnetizations, which will provide our first step towards wetting and entropic repulsion in the next section:

\begin{proposition}[Cassandro, Merola, Picco, Rozikov -- 2014]\label{prop2:CMPR}
For all $\alpha\in (\alpha_+,2]$, there exists $\beta_0\equiv \beta_0(\alpha)$ such that for all $\beta>\beta_0$,  $
\mu^{-+}_{\theta,\Lambda}[\omega_i] = \pm 1 \ \text{ if }i=\theta L \pm \frac{1}{2}$ and
$$
\begin{aligned}
\mu^{-+}_{\theta,\Lambda}[\omega_i]=
&\left[ 1-2e^{-2\beta( \zeta(\alpha)+J)}e^{\frac{2\beta}{\alpha-1}\frac{1}{|i-\theta L|^{\alpha-1}}}\left[ 1+\mathcal{O}(e^{-c_1\beta}) \right]\left[ 1+o\left( \frac{1}{L} \right) \right] \right]\\
&\times \left[ \mathbbm{1}_{i>\theta L+\frac{1}{2}} - \mathbbm{1}_{i<\theta L - \frac{1}{2}} \right].
\end{aligned} 
$$
\end{proposition}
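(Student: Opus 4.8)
The plan is to build directly on the derivative representation recorded just before the statement,
\[
\mu^{-+}_{\theta,\Lambda}[\omega_{x^*}]=\frac{d}{dg}\big(\log Z^{g,x^*}_{\theta,\Lambda}\big)\Big|_{g=0},
\]
and to run on $Z^{g,x^*}_{\theta,\Lambda}$ the very triangle/contour cluster expansion of \cite{CFMP,CMPR} that already produced Proposition \ref{prop:CMPR}. Since the field $g$ couples only to $\sigma_{x^*}$, it reweights each admissible contour family $\{\gamma\}$ (compatible with the constraint $I^*=\theta L$, i.e. with $\sigma\in\mathcal{S}_{\Lambda,\theta}$) by $e^{g\sigma_{x^*}(\{\gamma\})}$. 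Writing $\sigma^{0}_{x^*}:=\mathrm{sgn}(x^*-\theta L)$ for the value of the reference (ground) configuration with interface pinned at $\theta L$, only the families that reverse the spin at $x^*$ relative to this reference acquire a nontrivial factor $e^{-2g\sigma^{0}_{x^*}}$. Hence $\log Z^{g,x^*}_{\theta,\Lambda}=\mathrm{const}+g\,\sigma^{0}_{x^*}+\log\big\langle e^{-2g\sigma^{0}_{x^*}\mathbbm{1}[x^*\text{ reversed}]}\big\rangle$, and differentiating at $g=0$ yields $\mu^{-+}_{\theta,\Lambda}[\omega_{x^*}]=\sigma^{0}_{x^*}\,(1-2\,p_{x^*})$, where $p_{x^*}$ is the probability, in the constrained ensemble, that $\sigma_{x^*}$ is reversed from $\sigma^0_{x^*}$.

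The prefactor $\sigma^{0}_{x^*}$ already reproduces the announced sign structure. In the reference configuration the interface lies at $\theta L$, so $\sigma^{0}_{x^*}=+1$ when $x^*>\theta L+\tfrac12$ and $\sigma^{0}_{x^*}=-1$ when $x^*<\theta L-\tfrac12$, that is $\sigma^0_{x^*}=\mathbbm{1}_{i>\theta L+\frac12}-\mathbbm{1}_{i<\theta L-\frac12}$. For $i=\theta L\pm\tfrac12$ the spin immediately adjacent to the pinned interface point is forced, giving $\mu^{-+}_{\theta,\Lambda}[\omega_i]=\pm1$ and hence the first assertion. It therefore remains only to evaluate $p_{x^*}$ and to match it with the bracketed factor.

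For the evaluation I would expand $p_{x^*}$ in the low-temperature (large-$\beta$) regime: to leading order it is the activity of the smallest excitation reversing $x^*$, namely the base-length-one triangle carried by the two spin-flip points bracketing $i$, while all larger triangles and multi-contour families contribute only higher powers of the (small) contour activity. In the convention of \cite{CMPR} this minimal contour has Peierls weight $e^{-2\beta(\zeta(\alpha)+J)}$ — this is precisely the $m\to\infty$ cost of flipping a single spin deep inside a pure phase, and it is where the strengthened nearest-neighbor coupling $J=J(1)\gg1$ enters. The only $i$-dependence comes from the interaction of the reversed spin with the opposite phase lying beyond the interface: writing $m:=|i-\theta L|$, summing $|i-l|^{-\alpha}$ over the sites $l$ of the opposite phase and comparing with $\int_m^\infty x^{-\alpha}\,dx=\tfrac{1}{\alpha-1}m^{1-\alpha}$ produces the enhancement $e^{\frac{2\beta}{\alpha-1}m^{1-\alpha}}$, the reversed spin being partially aligned with that far phase. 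Together these give $p_{x^*}=e^{-2\beta(\zeta(\alpha)+J)}e^{\frac{2\beta}{\alpha-1}|i-\theta L|^{1-\alpha}}[1+\mathcal{O}(e^{-c_1\beta})]$, and substitution into $\mu^{-+}_{\theta,\Lambda}[\omega_i]=\sigma^0_{x^*}(1-2p_{x^*})$ gives the stated formula.

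The heuristic above is easy; the real work, and the main obstacle, is the rigorous control of the remainder inside the convergent cluster expansion of \cite{CMPR}. One must show that every family other than the single minimal reversal is genuinely subdominant and contributes only the two multiplicative corrections claimed: the factor $[1+\mathcal{O}(e^{-c_1\beta})]$ absorbing larger triangles and pairs of contours meeting $x^*$, and the factor $[1+o(1/L)]$ absorbing the finite-volume truncation together with the slowly (polynomially) decaying tails of the interaction. This in turn requires checking that the pinning constraint $\{I^*=\theta L\}$ — which fixes the unique unpaired spin-flip point — is compatible with, and does not interfere with, the local reversal at $x^*$, so that the reference energy cancels cleanly in $\log Z^{g,x^*}_{\theta,\Lambda}$ and the activities factorize exactly as in Proposition \ref{prop:CMPR}. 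Granting the Peierls estimates and the convergence of the triangle expansion established there for $\alpha\in(\alpha_+,2]$, these errors are uniform in $\theta\in T_L$ and in $i$, which finishes the proof.
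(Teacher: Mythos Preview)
The paper does not supply its own proof of this proposition: it is quoted verbatim from \cite{CMPR}, and the only ingredient the paper records is the derivative identity $\mu^{-+}_{\theta,\Lambda}[\omega_{x^*}]=\frac{d}{dg}\log Z^{g,x^*}_{\theta,\Lambda}\big|_{g=0}$ together with the remark that the same cluster expansion as in Proposition~\ref{prop:CMPR} yields the result. Your sketch follows precisely this indicated route --- derivative formula, reduction to the reversal probability $p_{x^*}$, and evaluation of $p_{x^*}$ via the minimal-triangle contribution in the \cite{CMPR} expansion --- so there is nothing to compare against beyond saying that you have correctly reconstructed the intended argument.

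One small caution on the heuristic for the $i$-dependent factor: the enhancement $e^{\frac{2\beta}{\alpha-1}|i-\theta L|^{1-\alpha}}$ does not arise simply as an energy gain of the reversed spin against the opposite phase (that would enter additively in the exponent of the Peierls weight, not multiplicatively through a separate exponential with its own $2\beta$). In \cite{CMPR} this factor comes out of the cluster expansion of $\log Z^{g,x^*}_{\theta,\Lambda}$ itself, where the polymer activities near the pinned interface acquire the $|i-\theta L|^{1-\alpha}$ correction; your integral heuristic gives the right scale and exponent but slightly misattributes the mechanism. This does not affect the structure of your argument, only the wording of that one step.
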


\section{Entropic repulsion --  Wetting transition}

For a fixed $N> 1$, we will consider the plus phase $\mu^{+}$, conditioned on the event $-_{-N,-1}$ of there being an interval $[-N,-1]$ of minus spins. We claim that there are two intervals of length of order  $L$, namely $[-N-\frac{(1-\varepsilon)}{2}L,-N-1]$, and $[0,\frac{(1-\varepsilon)}{2}L]$ (where $\varepsilon$ is from Corollary \ref{coro:CMPR}) left and right of the fixed interval, such that for $N \gg L$ both large enough, the magnetization of the spins in $\Delta$ conditioned on the event $-_{-N,-1}$ is negative, whenever $\Delta$ is in one of those intervals. These intervals play the role of a ``completely wet region'' in a wetting transition\footnote{Note that this wetting is a positive-temperature effect. Indeed, at zero temperature the interface with Dobrushin boundary conditions is homogeneously distributed, and a frozen interval of minuses, inserted in a plus configuration, will have only pluses to the left and to the right.} (See Figure 3). In other words,

\begin{proposition}\label{prop}

Let $\alpha\in (\alpha_+,2)$ and $\beta_1\equiv \beta_1(\alpha)$ from Corollary \ref{coro:CMPR}. Then, there exists $\beta_2>\beta_1$  such that for any $\beta>\beta_2$, there exists $\varepsilon=\varepsilon(\beta)$, $L_0(\alpha,\beta)>L(\varepsilon)$ from Corollary \ref{coro:CMPR} such that, for any $L>L_0$, if $N > L$ and $LN^{1-\alpha}=o(1)$ then,
$$
\mu^{+} [\omega_i| -_{-N,-1}] \leq - \frac{m}{2} 
$$
for a suitable $m=m(\beta)>0$ and for every $i\in [-N-\frac{(1-\varepsilon)}{2}L,-N-1]\cup [0,\frac{(1-\varepsilon)}{2}L]$.
\end{proposition}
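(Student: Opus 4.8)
The plan is to reduce the statement, through conditioning, monotonicity and a screening estimate, to the interface-localisation picture of Corollary \ref{coro:CMPR} together with the conditional-magnetisation estimate of Proposition \ref{prop2:CMPR}, applied in a suitably placed Dobrushin box. I will treat only the right interval $[0,\frac{(1-\varepsilon)}{2}L]$; the left interval $[-N-\frac{(1-\varepsilon)}{2}L,-N-1]$ follows by the reflection $i\mapsto -N-1-i$, which fixes the frozen interval $[-N,-1]$, interchanges the two outer half-lines, and leaves the Dyson interaction invariant.

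First I would fix the box $B=[0,L]$, which by translation covariance of the interaction is a Dobrushin box of half-width $L/2$ centred at $L/2$. Immediately to the left of $B$ sits the frozen interval $[-N,-1]$, which (since $N>L$) carries the value $-1$ over a stretch of length $\ge L$; to the right of $B$ lies the bulk $+$ phase. Thus $B$ sees an effective Dobrushin environment, $-$ to its left and $+$ to its right, the only defect being that beyond $-N$ the configuration is again $+$ rather than $-$. This is exactly where the hypothesis $LN^{1-\alpha}=o(1)$ enters: the total interaction energy between $B$ and $(-\infty,-N-1]$ is at most $2\beta\sum_{i\in B}\sum_{j\le -N-1}|i-j|^{-\alpha}=O(LN^{1-\alpha})=o(1)$, so flipping those far spins from $-$ to $+$ shifts every relevant partition function, and hence every local magnetisation in $B$, by only $o(1)$. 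In particular the free-energy profile governing the interface inside $B$ agrees, up to $o(1)$, with the symmetric $f_\alpha(\theta)$ of Proposition \ref{prop:CMPR}, which is negligible against the macroscopic gap of order $L^{2-\alpha}$ that drives the localisation.

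Next I would install the monotonicity comparison. Under $\mu^+[\cdot\mid -_{-N,-1}]$ the configuration on $B^c$ is stochastically dominated by the configuration $\xi$ equal to $-$ on $[-N,-1]$ and $+$ elsewhere, so by the FKG property of the Dyson specification the restriction to $B$ satisfies $\mu^+[\cdot\mid-_{-N,-1}]\big|_B\preceq \gamma^D_B(\cdot\mid\xi)$. Combining this with the $o(1)$ screening bound of the previous step gives, for $i\in B$,
$$
\mu^+[\omega_i \mid -_{-N,-1}] \;\le\; \mu^{-+}_{B}[\omega_i] + o(1),
$$
where $\mu^{-+}_{B}$ is the genuine CMPR Dobrushin measure on $B$. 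It then remains to bound $\mu^{-+}_{B}[\omega_i]$ for $i\in[0,\frac{(1-\varepsilon)}{2}L]$. Corollary \ref{coro:CMPR}, applied with half-width $L/2$, yields $I^*\ge \frac{(1-\varepsilon)}{2}L$ with probability $1-e^{-C(L/2)^{2-\alpha}(1+o(1))}$; on this event every such $i$ lies strictly on the $-$ side of the interface, so Proposition \ref{prop2:CMPR} gives $\mu^{-+}_{\theta,B}[\omega_i]\le -1+2e^{-2\beta(\zeta(\alpha)+J)}e^{2\beta/(\alpha-1)}(1+o(1))$. Feeding this into the decomposition (\ref{chain}) and bounding the complementary, exponentially rare, macroscopic event by $+1$ yields $\mu^{-+}_{B}[\omega_i]\le -1+c\,e^{-2\beta(\zeta(\alpha)+J)}$, with the correction genuinely small because $J\gg1$. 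Choosing $\beta>\beta_2$ so that this equals $-m$ for some $m=m(\beta)>0$, then $L_0$ large and $N$ with $LN^{1-\alpha}=o(1)$, absorbs the additive $o(1)$ and gives $\mu^+[\omega_i\mid-_{-N,-1}]\le -m/2$ on the right interval; reflection then covers the left one.

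The main obstacle is the middle step: justifying rigorously that the finite frozen interval of length $N$, together with the fluctuating $+$ phase in place of a frozen $+$ boundary, may be replaced by an idealised Dobrushin box at the cost of only an $o(1)$ error, uniformly across the three widely separated scales $L\ll N\ll\infty$. Everything hinges on the quantitative screening bound $O(LN^{1-\alpha})=o(1)$ being genuinely smaller both than the macroscopic free-energy gap $\sim L^{2-\alpha}$ that forces the localisation in Corollary \ref{coro:CMPR} and than the finite magnetisation gap produced by Proposition \ref{prop2:CMPR}; keeping these comparisons uniform while the volume limit $n\to\infty$ defining $\mu^+[\cdot\mid-_{-N,-1}]$ is taken is the delicate point, and is what pins down the admissible relation between $L$ and $N$.
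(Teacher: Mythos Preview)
Your strategy coincides with the paper's: FKG domination to reduce to a finite box with $+$ boundary outside and $-$ on $[-N,-1]$, the screening estimate $O(LN^{1-\alpha})=o(1)$ to replace the far $+$ boundary beyond $-N$ by $-$ and land on a genuine Dobrushin box, then Corollary~\ref{coro:CMPR} and Proposition~\ref{prop2:CMPR} combined through the decomposition~(\ref{chain}), and finally reflection/translation invariance for the left interval. The one cosmetic difference is that the paper takes the Dobrushin box $\tilde\Lambda_L=[0,2L]$ rather than your $B=[0,L]$, so that the interface sits near $L$ and every target site $i\in[0,\tfrac{(1-\varepsilon)}{2}L]$ keeps a macroscopic distance $\ge\tfrac{(1-\varepsilon)}{2}L$ from it; this makes the factor $e^{\frac{2\beta}{\alpha-1}|i-\theta L|^{1-\alpha}}$ in Proposition~\ref{prop2:CMPR} equal to $1+o(1)$ rather than the uniform constant $e^{2\beta/(\alpha-1)}$ you have to carry (and which forces you to lean on $J\gg1$).
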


\begin{proof}
Fix $\alpha\in (\alpha_+,2)$ and $\beta_0\equiv \beta_0(\alpha)$ from Proposition \ref{prop:CMPR}. We will first prove the statement for $i\in [0,\frac{(1-\varepsilon)}{2}L]$.
  
The main idea of our proof is to choose $N$ large enough for the total influence of all spins left of the interval to be bounded by a (small) constant, so that one can neglect boundary effects beyond $-N$ as in \cite{BLP}. Then inside the interval of length $L$, the interface separating the plus and minus phases is with large probability within the same window as with the Dobrushin boundary conditions. If afterwards we move the plus-boundary to the right, the location of the interface can also move only to the right, that is away from the frozen interface (by an FKG argument).

To make this precise we proceed as follows. 
From Corollary \ref{coro:CMPR}, if we would consider the interval $\tilde{\L}_L:=[0,2L]$ with Dobrushin boundary condition, the interface point will with overwhelming probability lie about halfway, with fluctuations which are ``mesoscopic'', that is, there exist{s} $\beta_1>\beta_0$, such that, for $\beta>\beta_1$ there exist $\varepsilon=\varepsilon(\beta)<1$ and $L(\varepsilon)\ge 1$ in which, for every $L>L(\varepsilon)$,
\begin{equation}\label{desloc}
\mu_{\tilde{\L}_L}^{-+}\left[\left| I^* - L \right|> \varepsilon  L \right]\le 3(1- \varepsilon) Le^{-C L^{2-\alpha}(1+o(L))},
\end{equation}
Let us take $i\in \Delta_{\varepsilon,L}:=[0,\frac{(1-\varepsilon)}{2}L]$. Note that, for every $\theta \in [1-\varepsilon,1+\varepsilon]$, we have $|i-\theta L|^{1-\alpha}\le [\frac{(1-\varepsilon)}{2}L]^{1-\alpha}$. Thus, for every $0<\delta<1$, there exists $\beta_2\equiv \beta_2(\alpha,\delta)>\beta_1$ such that, for every $\beta>\beta_2$, there exists $L(\alpha,\beta,\varepsilon,\delta)>L(\varepsilon)$ so that, by Proposition \ref{prop2:CMPR}, for every $L>L(\alpha,\beta,\varepsilon,\delta)$, we have $\mu^{-+}_{\theta,\tilde{\Lambda}_L}[\omega_i]<-1+\delta$ for every $i\in \Delta_{\varepsilon,L}$.
By (\ref{chain}) and (\ref{desloc}), for every $i\in \Delta_{\varepsilon,L}$,
\begin{equation}\label{eq:magnetization}
\begin{aligned}
\mu^{-+}_{\tilde{\Lambda}_L}[\omega_i]&=\sum_{1-\varepsilon\le \theta \le 1+\varepsilon}\mu^{-+}_{\theta,\tilde{\Lambda}_L}[\omega_i]\mu^{-+}_{\tilde{\Lambda}_L}(I^*=\theta L)+\sum_{\substack{\theta<1-\varepsilon \\ \theta>1+\varepsilon}}\mu^{-+}_{\theta,\tilde{\Lambda}_L}[\omega_i]\mu^{-+}_{\tilde{\Lambda}_L}(I^*=\theta L)\\
&\le \sum_{1-\varepsilon\le \theta \le 1+\varepsilon}\mu^{-+}_{\theta,\tilde{\Lambda}_L}[\omega_i]\mu^{-+}_{\tilde{\Lambda}_L}(I^*=\theta L)+ 3(1-\varepsilon)Le^{-CL^{2-\alpha}(1+o(L))}\\
&<(-1+\delta)(1-3(1-\varepsilon)Le^{-CL^{2-\alpha}(1+o(L))})+3(1-\varepsilon)Le^{-CL^{2-\alpha}(1+o(L))}\\
&<-1+\eta,
\end{aligned}
\end{equation}
for some $0<\eta<1$ and for every $L$ sufficiently large.
For any $N>1$, if we lift the constraint that all spins are minus to the left of site $-N$, the total energy due to the boundary condition changing inside the interval $\Delta_{\varepsilon,L}$ is bounded by
\begin{equation}\label{prop:energy}
\sum_{j< -N}\sum_{i=0}^{2L} \frac{1}{|i-j|^{\alpha}}\omega_i\omega_j \le (2L+1)\sum_{i> N} \frac{1}{i^{\alpha}}\le 3L\int_{N}^{+\infty}\frac{1}{x^{\alpha}}\d x \le \frac{3}{\alpha-1}LN^{1-\alpha}
\end{equation}
which remains bounded since we are assuming $LN^{1-\alpha}=o(1)$ and, in particular, for $i\in \Delta_{\varepsilon,L}$, by (\ref{eq:magnetization}), we have
\begin{equation}
\mu_{{\L}^{'}_L}^{+}[\omega_i| -_{-N,-1}] \leq 
e^{3\beta cLN^{1-\alpha}} \mu_{\tilde{\L}_L}^{-+}(\omega_i=1) - \mu_{\tilde{\L}_L}^{-+}(\omega_i=-1) < -\frac{m}{2}
\end{equation}
where ${\L}^{'}_L = [-N, 2L]$ and $m:=\mu^{+}[\omega_i]$.

Due to the FKG property, for any $\L$ such that ${\L}^{'}_L \subset \L$, we have
$$
\mu_{\L}^{+}[\omega_i| -_{-N,-1}] \leq  \mu_{{\L}^{'}_L}^{+}[\omega_i| -_{-N,-1}],
$$
for all $i\in \Delta_{\varepsilon,L}$. Therefore, for any site $i \in [0,\frac{(1-\varepsilon)}{2} L]$, there exists $L_0\ge 1$ such that, for $L>L_0$ and $LN^{1-\alpha}=o(1)$, $\mu^{+}[ \omega_i| -_{-N,-1}] < - \frac{m}{2}.$

For the wetting of sites $i$ in the other interval $[-N-\frac{(1-\varepsilon)}{2}L,-N-1]$, we consider the Gibbs measure with reverse Dobrushin boundary condition $\mu^{+-}$, i.e., $\omega_i=1$ if $i<0$, and $\omega_i=-1$ if $i\ge 0$, and apply the same argument as above. Thus, for $N$ large enough,
$$
\mu^{+}[\omega_i| -_{1,N}] <-\frac{m}{2}
$$
for every $i \in [-\frac{(1-\varepsilon)}{2}L,0]$, where $-_{1,N}$ is the event of there being an interval $[1,N]$ of minus spins. Since the Dyson model is translational invariant, when we shift all sites by $-N$, we are done.
\end{proof}
%We notice that, as should be the case with entropic effects,  this only happens at positive temperatures. Indeed, the groundstate consists of  the configuration with all spins outside the interval being plus.

%To derive this from \cite{CMPR} is fairly direct. Indeed, their main result states that if we would consider the interval $[0,L]$ with minus boundary conditions infinitely far  to the left and plus boundary conditions infinitely far to the right, the ``interface point'' will with overwhelming probability lie about halfway, that is at location $\frac{1}{2} L$ with fluctuations which are ``mesoscopic'', that is, of order $O(L^{2-\alpha})$. But if we lift the constraint that all spins are minus left of site $-N$, the total energy due to the boundary condition changing inside the interval $[0,L]$ remains small as long as 
%$$L N^{1- \alpha}$ is small enough, and in particular for $i < \frac{1}{2} L$ the local magnetisation remains negative. Moving the plus boundary condition to the right only decreases it (due to the FKG property) so also $\mu^{+}( \omega_i| -_{N})$ remains negative and less than $- \frac{1}{2} m$. Sufficiently far from the frozen interval $[-N,-L]$, the fact that we are in the plus phase will be visible, it implies that the local magnetisations then are positive. 
%
%Sufficiently far from the frozen interval $[-N,-L]$, the fact that we are in the $+$-phase will be visible, this  implies that the local magnetisations then are positive. 

\begin{center}
 \begin{tikzpicture}[]
    
\draw (-6,0) -- (6,0);

\node [below] at (-6,-0.2) {$-N-\frac{(1-\varepsilon)}{2}L$};
%\node [below] at (-3.5,-0.2) {$I^*_1$};
\node [below] at (-1.5,-0.2) {$-N$};
\node [below] at (1.5,-0.2) {$0$};
%\node [below] at (3.5,-0.2) {$I^*_2$};
\node [below] at (5.5,-0.2) {$\frac{(1-\varepsilon)}{2}L$};

%------------- Circles
\foreach \n in {-5,...,5}{%
       \draw[fill] (\n,0) circle (1pt);
    }
    
\foreach \n in {-11,...,11}{%
       \draw[fill] (\n/2,0) circle (1pt);
    }
%--------------

%-------------- Between -N-L and L^*_1
%Nothing
%--------------

%-------------- Between -N-L and -N-1
\draw [decoration={brace,amplitude=0.4cm}, decorate] (-5.5,0) -- (-2,0);
\node[align=center, above] at (-4,0.5) {$-$ phase};
%--------------

%-------------- Between -N and 0
\foreach \n in {-1,0,1}{%
        \draw[fill] (\n,0) node [above] {$-$}; 
    }
    
\foreach \n in {-3,-1,1}{%
        \draw[fill] (\n/2,0) node [above] {$-$};    
    }
%--------------

%-------------- Between 1 and L
\draw [decoration={brace,amplitude=0.4cm}, decorate] (1.5,0) -- (5.5,0);
 \node[align=center, above] at (3.5,0.5) {$-$ phase};
%--------------

%-------------- Between I^*_2 and L
%Nothing
%--------------

 \node[align=center, below] at (0,-1)%
{Figure 3 : {wetting transition at low temperature}};
 \end{tikzpicture}
\end{center}

\section{Lack of the $g$-measure property: proof}

In this section, we provide the proof of Theorem \ref{thm:main}.

The main idea is first to decouple the spins in a subinterval $[1,L_1]$ of the ``wet'' minus interval of length $o(L)$, such that $L_1$ is large, but small compared to $L$. As the energy difference due to the decoupling is small compared to the energy cost of moving the interface, the location of the interface as analyzed in  \cite{CMPR} does not change, when viewed  on scale $L$.
If then, in the next step, the decoupled region is frozen in an alternating configuration and recoupled, this causes an extra finite-energy term --as compared to being decoupled--, which again will hardly influence the location of the interface (and thus the size of the wet region). 

%\bfblue{Let us first present two lemmas below.}
%\bfblue{
%\begin{lemma}\label{lem:1}
%Let $L_1\ge 1$ and $\alpha\in (1,2)$, then
%$$
%\sum_{\substack{j<0 \\ j>L_1}}\sum_{0\le i\le L_1}\omega_i\omega_j |i-j|^{-\alpha}=\mathcal{O}(L_1^{2-\alpha}).
%$$
%\end{lemma}
%\begin{proof}
%The proof is done by the following straightforward computation,
%$$
%\begin{aligned}
%\sum_{\substack{j<0 \\ j>L_1}}\sum_{0\le i\le L_1}\omega_i\omega_j |i-j|^{-\alpha} &\le \int_{(-\infty,-1)\cup (L_1+1,\infty)}\int_{0}^{L_1}\frac{1}{|x-y|^{\alpha}}\d x\d y\\
%& = \mathcal{O}(L_1^{2-\alpha}),
%\end{aligned}
%$$
%as we desired.
%\end{proof}
%}
%
%\begin{lemma}\label{lem:2}
%If we condition on all spins in $[0,L_1]$ being alternating, the sum of the interaction terms  connecting $[0,L_1]$ with its complement is uniformly bounded in $L_1$.
%\end{lemma}

Let us first present a lemma.
\begin{lemma}\label{lema-aernout}
Let $\alpha \in (1,2)$ and $L_1> 1$. Consider the observable in $\Omega$ given by
\begin{equation}\label{eqB}
B(\omega)=\sum_{j\notin [-L_1,-1]}\sum_{i\in [-L_1,-1]}\frac{(-1)^i}{|i-j|^{\alpha}}\omega_j.
\end{equation}
Then, there exists $c>0$ such that $\sup_{\omega} |B(\omega)| = \lVert B \rVert\le c$, where $c$ does not depend on $L_1$.
\end{lemma}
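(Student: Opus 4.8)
The plan is to exploit the cancellation provided by the alternating sign $(-1)^i$; without it the bound genuinely diverges in $L_1$, since one would be left with $\sum_{i=-L_1}^{-1}\sum_{j}|i-j|^{-\alpha}$, which is of order $L_1^{2-\alpha}$ and blows up for $\alpha<2$. Capturing this cancellation is therefore the crux. First I would interchange the order of summation and pull out $\omega_j$: writing $S_j := \sum_{i=-L_1}^{-1}\frac{(-1)^i}{|i-j|^{\alpha}}$, we have $B(\omega)=\sum_{j\notin[-L_1,-1]}\omega_j\,S_j$, and since $|\omega_j|\le 1$ it suffices to bound $\sum_{j\notin[-L_1,-1]}|S_j|$ uniformly in $L_1$.

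Next, fix $j\notin[-L_1,-1]$ and observe that $i\mapsto |i-j|^{-\alpha}$ is monotone on the index set $\{-L_1,\dots,-1\}$: it increases if $j\ge 0$ (the distance $j-i$ shrinks as $i$ grows) and decreases if $j\le -L_1-1$ (the distance $i-j$ grows as $i$ grows). Hence $S_j$ is an alternating sum of a monotone positive sequence. By the standard alternating-series (Leibniz) estimate --- or, equivalently, a one-line summation by parts using that the partial sums of $(-1)^i$ are bounded by $1$ --- the absolute value of such a sum is controlled by its single largest term, namely the term whose index sits closest to $j$. Concretely, $|S_j|\le (j+1)^{-\alpha}$ when $j\ge 0$ (largest term at $i=-1$) and $|S_j|\le(-L_1-j)^{-\alpha}$ when $j\le -L_1-1$ (largest term at $i=-L_1$); note that $j+1\ge 1$ and $-L_1-j\ge 1$ in the respective ranges.

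Finally I would sum these two bounds over $j$. Reindexing each half by the distance $m\ge 1$ from $j$ to the interval $[-L_1,-1]$, both halves become $\sum_{m\ge 1}m^{-\alpha}=\zeta(\alpha)$, which converges precisely because $\alpha>1$. Therefore $\sup_\omega|B(\omega)|\le\sum_{j\notin[-L_1,-1]}|S_j|\le 2\zeta(\alpha)$, and we may take $c=2\zeta(\alpha)$, which is finite and manifestly independent of $L_1$. The only delicate point is the alternating-sum estimate in the middle step; the rest is bookkeeping. I would be careful to state the monotonicity correctly in each of the two cases, since it is exactly what makes the Leibniz bound applicable and places the largest term at the endpoint nearest $j$.
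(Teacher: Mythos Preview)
Your proof is correct and follows essentially the same route as the paper: both arguments split $j$ into the two half-lines, bound the inner alternating sum $S_j=\sum_{i}(-1)^i|i-j|^{-\alpha}$ by its largest (nearest-to-$j$) term via the Leibniz alternating-series estimate, and then sum the resulting $(j+1)^{-\alpha}$ and $(-L_1-j)^{-\alpha}$ tails to obtain a constant of order $\zeta(\alpha)$ independent of $L_1$. Your presentation is slightly more streamlined in that you bound $\sum_j|S_j|$ directly rather than first identifying the extremizing $\omega$ and introducing the auxiliary remainder function $R_N(\alpha)$, but the substance is the same.
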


\begin{proof}
Let us find a configuration $\omega$ which attains the maximum of the sum (\ref{eqB}).
Note that
$$
\sum_{j\notin [-L_1,-1]}\sum_{i\in [-L_1,-1]}\frac{(-1)^i}{|i-j|^{\alpha}}\omega_j 
= \underbrace{\sum_{j\ge 0}\sum_{i=1}^{L_1}\frac{(-1)^i}{|i+j|^{\alpha}}\omega_j}_{(a)}+\underbrace{\sum_{j\ge 1}\sum_{i=1}^{L_1}\frac{(-1)^i}{|i-L_1-j|^{\alpha}}\omega_{-L_1-j}}_{(b)}.
$$

For $(a)$, note that, for each $j\ge 0$, the sum 
$\sum_{i=1}^{L_1}\frac{(-1)^i}{|i+j|^{\alpha}}\omega_j$
is positive whenever $\omega_j=-1$. Thus, $\omega_j=-1$ for every $j\ge 0$.

For $(b)$, for each $j\ge 0$, the sum
$\sum_{i=1}^{L_1}\frac{(-1)^i}{|i-L_1-j|^{\alpha}}\omega_{-L_1-j}$
is positive if satisfies the following,
\begin{itemize}
\item If $L_1$ is even, then $\omega_j=1$ for every $j<-L_1$;
\item If $L_1$ is odd, then $\omega_j=-1$ for every $j<-L_1$.
\end{itemize}
Thus, the configuration $\omega$ constructed above makes $(a)$ and $(b)$ be positive, and so maximizes $B(\omega)$.

Now, from $(a)$, let us prove that there exists $c_1>0$ such that
\begin{equation}\label{eq_c1}
\sum_{j\ge 0}\sum_{i=1}^{L_1}\frac{(-1)^{i+1}}{|i+j|^{\alpha}}\le c_1
\end{equation}
is summable and $c_1$ does not depend on $L_1$. Define
$$
R_N(\alpha)=\sum_{n>N}\frac{(-1)^{n+1}}{n^{\alpha}}.
$$
Then, for a fixed $j\ge 0$,
\begin{equation}\label{eq_R}
\sum_{i=1}^{L_1}\frac{(-1)^{i+1}}{|i+j|^{\alpha}} = R_{j}(\alpha) - R_{L_1+j}(\alpha).
\end{equation}
We have, for each $N\ge 0$,
$$
|R_{2N}(\alpha)| \le  \sum_{n\ge N+1}\int_{2n-1}^{2n} \alpha x^{-\alpha-1}\d x
< \frac{1}{(2N+1)^{\alpha}}.
$$
Also, with the same argument,
$$
|R_{2N+1}(\alpha)| \le \frac{1}{(2N+2)^{\alpha}}.
$$
Thus, for any $N\ge 1$, we have $|R_{N}(\alpha)| \le (N+1)^{-\alpha}$, then,
\begin{equation}\label{eq2_R}
R_{j}(\alpha) - R_{L_1+j}(\alpha) \le (j+1)^{-\alpha} + (L_1+j+1)^{-\alpha}.
\end{equation}
Since $\sum_{n\ge 1}n^{-\alpha}$ is summable, we have that the sum (\ref{eq_R}) is summable. Therefore, there exists $c_1>0$ such that (\ref{eq_c1}) holds.
Moreover, since the right hand side of (\ref{eq2_R}) decreases when $L_1$ increases, the constant $c_1$ does not depend on $L_1$.

For $(b)$, we have, for a fixed $j\ge 1$ and $L_1$ even,
$$
\begin{aligned}
\sum_{i=1}^{L_1}\frac{(-1)^{i}}{|i-L_1-j|^{\alpha}}
&=R_{L_1+j-1}(\alpha) - R_{j-1}(\alpha)
&\le \frac{1}{(L_1+j)^{\alpha}}+\frac{1}{j^{\alpha}}.
\end{aligned}
$$
For $L_1$ odd, the argument is the same,
$$
\sum_{i=1}^{L_1}\frac{(-1)^{i+1}}{|i-L_1-j|^{\alpha}} \le \frac{1}{(L_1+j)^{\alpha}}+\frac{1}{j^{\alpha}}.
$$
Thus, there exists $c_2>0$ such that
\begin{equation}\label{eq_c2}
\sum_{j\ge 1}\sum_{i=1}^{L_1}\frac{(-1)^i}{|i-L_1-j|^{\alpha}}\omega_{-L_1-j} \le c_2.
\end{equation}
By (\ref{eq_c1}) and (\ref{eq_c2}), we conclude the proof.

\end{proof}

\begin{proof}[Proof of Lemma \ref{thm:main}]
For a fixed $\alpha\in (\alpha_+,2)$ and $L_1>1$, let us consider the interaction set $\Upsilon_{L_1}=\{\{i,j\}\in \mathbb{Z}^2:i\neq j,\ \{i,j\}\subset [-L_1,-1] \text{ or }\{i,j\}\cap [-L_1,-1]=\emptyset\}$, i.e., we remove the interactions between $[-L_1,-1]$ and its complement. For a finite subset $\Lambda$ containing $[-L_1,-1]$ denote the Hamiltonian
\begin{equation}\label{hamilton}
H^{\tau}_{\Lambda,1}(\omega)=-\sum_{\substack{\{i,j\}\in \Upsilon_{L_1}\\ i,j\in \Lambda}}|i-j|^{-\alpha} \omega_i\omega_j - \sum_{\substack{\{i,j\}\in \Upsilon_{L_1}\\ i\in \Lambda, j\notin \Lambda}}|i-j|^{-\alpha} \omega_i\tau_j,
\end{equation}
where $\tau$ is a boundary condition. Denote by $\mu_{\Lambda,1}^{\tau}$ be  corresponding Gibbs measures

$$
\mu_{\Lambda,1}^{\tau}(\omega)=\frac{1}{Z^{\tau}_{\Lambda,1}}e^{-\beta H^{\tau}_{\Lambda,1}(\omega)}.
$$
Note that the cost of the total energy to remove these bonds is bounded by
\begin{equation}\label{thm:eq1}
\left|H^{\tau}_{\Lambda}(\omega)- H^{\tau}_{\Lambda,1}(\omega) \right|=\left|\sum_{\substack{j<-L_1 \\ j>-1}}\sum_{-L_1\le i\le -1} |i-j|^{-\alpha}\omega_i\omega_j\right| \le c L_1^{2-\alpha}
\end{equation}
for every finite subset $\Lambda$ containing $[-L_1,-1]$, for some constant $c>0$.  Consider $\beta > \beta_{2}$ from Proposition \ref{prop}, $L=L(L_1)$ satisfying $L_1=o(L)$, and the interval $\Delta_{2L}=[-L_1,2L-L_1]$. By Corollary \ref{coro:CMPR} and (\ref{thm:eq1}), we have
\begin{equation}\label{eq:ineq}
\begin{aligned}
\mu_{\Delta_{2L,1}}^{-+}\left[\left| I^* - \left(L - L_1\right) \right|> \varepsilon  L \right] \leq  3(1-\varepsilon) Le^{-C L^{2-\alpha}(1+o(L))+\beta o(L^{2-\alpha})}
\end{aligned}
\end{equation}
for some constant $C>0$. Hence the location of the interface point will not be majorly effected and, by Proposition \ref{prop2:CMPR}, we have $\mu_{\Delta_{2L,1}}^{-+}[\omega_i] <-m/2$ for every $i\in \Delta_L=[-L_1,\frac{(1-\varepsilon)}{2}L-L_1]$.

Using the same argument as in Proposition \ref{prop}, for $\beta_3\equiv \beta_3(\alpha)$ and $\beta>\beta_3$, for $L$ with $L_1=o(L)$ and $N> N(L)$ such that $LN^{1-\alpha}=o(1)$, the magnetization of each spin in $\left[0,\frac{(1-\varepsilon)}{2}L-L_1\right]\cup [-N-L_1-\frac{(1-\varepsilon)}{2}L,-N-L_1-1]$ is negative when we constrain the frozen interval $[-N-L_1,-L_1-1]$ to be minus, i.e., considering $\Lambda''_L=[-N-L_1,2L-L_1]$,
\begin{equation}
\begin{aligned}
\mu^+_{\Lambda''_{L,1}}[\omega_i| -_{-N-L_1,-L_1-1}] &\le e^{3\beta cLN^{1-\alpha}}\mu^{-+}_{\Delta_{2L,1}} [\omega_i=1] - \mu^{-+}_{\Delta_{2L,1}} [\omega_i=-1] \\
&< -\frac{m}{2}.\\
\end{aligned}
\end{equation}

Now, denote by $A_{L_1}$ the set of configurations that are alternating in $[-L_1,-1]$. Since 
$$
\mu^+_{\Lambda''_{L,1}}[\omega_i| -_{-N-L_1,-L_1-1}] = \mu^+_{\Lambda''_{L,1}}[\omega_i| -_{-N-L_1,-L_1-1}\cap A_{L_1}],
$$
for every $i\in \left[0,\frac{(1-\varepsilon)}{2}L-L_1\right]\cup [-N-L_1-\frac{(1-\varepsilon)}{2}L,-N-L_1-1]$, by FKG inequality, we have
\begin{equation}
\mu^+_1 [\omega_i| -_{-N-L_1,-L_1-1}\cap A_{L_1}] \leq  - \frac{m}{2}.
\end{equation}
Thus, the spins in the set $\left[0,\frac{(1-\varepsilon)}{2}L-L_1\right]\cup [-N-L_1-\frac{(1-\varepsilon)}{2}L,-N-L_1-1]$ are in the minus phase.

By the same argument, considering all spins in the frozen interval $[-N-L_1,-L_1-1]$ being plus, then the spins in the set $\left[0,\frac{(1-\varepsilon)}{2}L-L_1\right]\cup [-N-L_1-\frac{(1-\varepsilon)}{2}L,-N-L_1-1]$ are in the plus phase (see Figure 4). In particular, 
\begin{equation}
\mu^+_1 [\omega_0| -_{-N-L_1,-L_1-1}\cap A_{L_1}] \leq  - \frac{m}{2} <0 < \frac{m}{2} \leq \mu^+_1 [\omega_0| +_{-N-L_1,-L_1-1}\cap A_{L_1}]. 
\end{equation}

The measures $\mu^+_1 [ \cdot | -_{-N-L_1,-L_1-1}\cap A_{L_1}]$ and  $\mu^+_1 [ \cdot | +_{-N-L_1,-L_1-1}\cap A_{L_1}]$ are FKG measures (satisfying the FKG inequality). This fact is a consequence of the Holley inequality and, in addition, these measures are extremal Gibbs measures associated to the Hamiltonian (\ref{hamilton}).

By Lemma \ref{lema-aernout}, the sum of the interaction terms between $[-L_1,-1]$ and its complement is uniformly bounded by a constant. Then,
if we insert back the interactions connecting with $\Upsilon_{L_1}$, 
this changes the Hamiltonian by a uniformly bounded (finite-energy) term. 

Using a Bricmont--Lebowitz--Pfister type argument as in \cite{BLP}, we can show that conditional probabilities with respect to the original measures
$\mu^+ [ \cdot | -_{-N-L_1,-L_1-1}\cap A_{L_1}]$ and  $\mu^+ [ \cdot | +_{-N-L_1,-L_1-1}\cap A_{L_1}]$, associated to the of the Dyson model, are equivalent to $\mu^+_1 [ \cdot | -_{-N-L_1,-L_1-1}\cap A_{L_1}]$ and  $\mu^+_2 [ \cdot | +_{-N-L_1,-L_1-1}\cap A_{L_1}]$ respectively, and then they are also different extremal Gibbs measures. In addition, 
\begin{equation}
\mu^+ [\omega_0| -_{-N-L_1,-L_1-1}\cap A_{L_1}] < \mu^+ [\omega_0| +_{-N-L_1,-L_1-1}\cap A_{L_1}].
\end{equation}
Thus, for every configuration $\omega^+ \in \mathcal{N}_{N+L_1,L_1}^{+,left}(\omega_{\rm alt})$ and $\omega^- \in \mathcal{N}_{N+L_1,L_1}^{-,left}(\omega_{\rm alt})$, there exists $\delta>0$ such that
$$
\left| \mu^{+,\omega^+}_{\mathbb{Z}_+}[\sigma_0] -  \mu^{+,\omega^-}_{\mathbb{Z}_+}[\sigma_0] \right| > \delta,
$$
for $L_1$ large enough, as we desired.
\end{proof}

\begin{center}
 \begin{tikzpicture}[]
    
\draw (-6,0) -- (6,0);

\node [below] at (-6.3,-0.2) {$-N-L_1-\frac{(1-\varepsilon)}{2}L$};

%\node [below] at (-4,-0.2) {$L^*_2$};

\node [below] at (-3,-0.2) {$-N-L_1$};

\node [below] at (-1.7,-0.2) {$\cdots$};

\node [below] at (-1,-0.2) {$-L_1$};

\node [below] at (2,-0.2) {$0$};

%\node [below] at (4,-0.2) {$I^*$};

\node [below] at (5.5,-0.2) {$-L_1+\frac{(1-\varepsilon)}{2}L$};

%------------- Circles
\foreach \n in {-5,...,-3,-1,0,1,2,3,4,5}{%
       \draw[fill] (\n,0) circle (1pt);
    }
    
\foreach \n in {-11,...,-5,-3,-2,-1,0,1,2,3,4,5,6,7,8,9,10,11}{%
       \draw[fill] (\n/2,0) circle (1pt);
    }
%--------------

%-------------- Between L^_2 and -N-L_1
    
\draw [decoration={brace,amplitude=0.3cm}, decorate] (-5.5,0) -- (-3,0);
 \node[align=center, above] at (-4.3,0.4) {$-$ phase};
%--------------

%-------------- Between -N-L_1 and -L_1

\draw [decoration={brace,amplitude=0.25cm}, decorate] (-2.5,0) -- (-1.5,0);
 \node[align=center, above] at (-2,0.4) {$-$};
 
%--------------

%--------------  Between -L_1 and 0

\foreach \n in {-1,...,1}{%
        \draw[fill] (\n,0) node [above] {$+$}; 
    }
    
\foreach \n in {-1,1,3}{%
        \draw[fill] (\n/2,0) node [above] {$-$};    
    }
    
    %--------------  Between 0 and L-L_1

\draw [decoration={brace,amplitude=0.4cm}, decorate] (2,0) -- (5.5,0);
 \node[align=center, above] at (3.7,0.4) {$-$ phase};
    
%%--------------------------------------
%%--------------------------------------

\draw (-6,-2.5) -- (6,-2.5);

\node [below] at (-6.3,-2.7) {$-N-L_1-\frac{(1-\varepsilon)}{2}L$};

%\node [below] at (-4,-2.7) {$L^*_2$};

\node [below] at (-3,-2.7) {$-N-L_1$};

\node [below] at (-1.7,-2.7) {$\cdots$};

\node [below] at (-1,-2.7) {$-L_1$};

\node [below] at (2,-2.7) {$0$};

%\node [below] at (4,-2.7) {$I^*$};

\node [below] at (5.5,-2.7) {$-L_1+\frac{(1-\varepsilon)}{2}L$};

%------------- Circles
\foreach \n in {-5,...,-3,-1,0,1,2,3,4,5}{%
       \draw[fill] (\n,-2.5) circle (1pt);
    }
    
\foreach \n in {-11,...,-5,-3,-2,-1,0,1,2,3,4,5,6,7,8,9,10,11}{%
       \draw[fill] (\n/2,-2.5) circle (1pt);
    }
%--------------

%-------------- Between L^_2 and -N-L_1
    
\draw [decoration={brace,amplitude=0.3cm}, decorate] (-5.5,-2.5) -- (-3,-2.5);
 \node[align=center, above] at (-4.3,-2.1) {$+$ phase};
%--------------

%-------------- Between -N-L_1 and -L_1

\draw [decoration={brace,amplitude=0.25cm}, decorate] (-2.5,-2.5) -- (-1.5,-2.5);
 \node[align=center, above] at (-2,-2.1) {$+$};
 
%--------------

%--------------  Between -L_1 and 0

\foreach \n in {-1,...,1}{%
        \draw[fill] (\n,-2.5) node [above] {$+$}; 
    }
    
\foreach \n in {-1,1,3}{%
        \draw[fill] (\n/2,-2.5) node [above] {$-$};    
    }
    
    %--------------  Between 0 and L^*_1

\draw [decoration={brace,amplitude=0.4cm}, decorate] (2,-2.5) -- (5.5,-2.5);
 \node[align=center, above] at (3.7,-2.1) {$+$ phase};
 
 \node[align=center, below] at (0,-3.3)%
{Figure 4 :  from wetting to essential discontinuity. \\ Here $L_1=o(L)$ and $LN^{1-\alpha}=o(1)$};
 \end{tikzpicture}
\end{center}

\section{Final remarks and  open questions}

We have as our main result shown that between  the class of Gibbs measures and the class of $g$-measures, neither of them contains the other one. Thus one-sided continuity and two-sided continuity of conditional probabilities are really different properties and there exists a clear distinction between these two notions.

The result on entropic repulsion which we used in the proof presumably can be improved in various respects. We mention a few open questions regarding these issues.

It is not clear to us whether entropic repulsion holds for the case $\alpha =2$. The interface in that case has macroscopic, rather than mesoscopic flucuations, which makes our proof break down.

Neither is it clear to us whether the methods of 
Littin and Picco \cite{LitP} will allow  to extend the entropic repulsion results to other $\alpha$ values, although we expect them to hold also in that regime.

We give lower bounds for the entropic repulsion, that is the size of the ``wet'' region  but have neither checked if upper bounds are feasible, nor if the entropic repulsion holds all the way up to the critical point.

%We give a fairly rough estimate lower bound on entropic repulsion, are there sharp upper bounds??None urgent to do, I think, but we could mention.

{\bf Acknowledgements:} We thank S. Bethuelsen, M. Cassandro, L. Cioletti, D. Con\-ache, R. Fern\'andez, S. Gallo, G. Iacobelli, G. Maillard, F.  Paccaut and E. Verbitskiy for various helpful conversations over the years. We thank Evgeny Verbitskiy for providing us with \cite{BFV} and Jorge Littin for providing us with \cite{LitP}.

RB is partially supported by the Dutch stochastics cluster STAR, by FAPESP Grant 2011/16265-8, and  CNPq grants 453985/2016-5, 312112/2015-7 and 446658/2014-6.
 
EOE is supported by FAPESP  grants 14/10637-9, 15/14434-8.

ALN has benefited from various supports (STAR, CNRS, EURANDOM, Lorentz Center, TU Delft, RU Groningen) for short research  visits to the Netherlands.

\end{document}